\newtheorem{proposition}{Proposition}[]
\definecolor{Gray}{gray}{0.95}
\definecolor{LightCyan}{rgb}{0.8,0.85,1}
\definecolor{LightBlue}{rgb}{0.6,0.6,1}
\setlist{nosep}
\begin{document}
%
\title{Optimizing Integrated Terrestrial and Non-Terrestrial Networks Performance with Traffic-Aware Resource Management}
%
%
%

\author{\IEEEauthorblockN{Henri Alam\thanks{Henri Alam is with Eurecom and Paris Research Center, Huawei Technologies, 92100 Boulogne-Billancourt, France (e-mail: alam.henri@huawei.com).},~\IEEEmembership{Student Member,~IEEE,}
Antonio De Domenico\thanks{Antonio de Domenico is with Paris Research Center, Huawei Technologies, 92100 Boulogne-Billancourt, France.},~\IEEEmembership{Senior Member,~IEEE,}
David López-Pérez\thanks{David López-Pérez is with Universitat Politècnica de València, 46022 Valencia, Spain.},~\IEEEmembership{Senior Member,~IEEE,} and
Florian Kaltenberger\thanks{Florian Kaltenberger is with Eurecom, 06904 Sophia Antipolis, France.},~\IEEEmembership{Senior Member,~IEEE}}
}
\markboth{IEEE TRANSACTIONS ON }%
{Shell \MakeLowercase{\textit{et al.}}: Optimizing Integrated Terrestrial and Non-Terrestrial Networks Performance with Traffic-Aware Resource Management}
%



\maketitle

\begin{abstract}
To address an ever-increasing demand for ubiquitous high-speed connectivity, {mobile network deployments are becoming increasingly dense}.
However, {this densification has also led to a surge in overall energy consumption, making the process increasingly challenging.}
In recent years, \acp{NTN} have been {mainly} endorsed as a potential solution to {enhance coverage} by complementing the coverage of the \ac{TN} in areas with limited network deployment.
{However, their ability to reduce \ac{TN} energy consumption, though often overlooked, remains a significant advantage.}
To this end, this paper introduces a novel radio resource management algorithm, BLASTER (Bandwidth SpLit, User ASsociation, and PowEr ContRol), which integrates bandwidth allocation, \ac{UE} association, power control, and base station activation within an integrated \ac{TN-NTN}. 
This algorithm aims to optimize network resource allocation fairness and energy consumption dynamically, demonstrating new opportunities in deploying satellite networks in legacy cellular systems. Our study offers a comprehensive analysis of the integrated network model, emphasizing the effective balance between energy saving and \ac{QoS}, and proposing practical solutions to meet the fluctuating traffic demands of cellular networks.
\end{abstract}

\begin{IEEEkeywords}
NTN, LEO Satellites, Load Balancing, Network Energy Efficiency, Resource Allocation.
\end{IEEEkeywords}

%
\IEEEpeerreviewmaketitle

\section{Introduction}

\IEEEPARstart{I}{n} recent years, 
the rapid advancement of cellular communications has driven a marked increase in the demand for high-speed data connectivity. 
This surge has led to stringent requirements for achieving widespread network connectivity and delivering enhanced network capacity.
To meet these challenges, 
mobile operators have intensified terrestrial \acp{MBS} deployment. 
However, this approach has its limits, 
as coverage cannot be guaranteed in logistically challenging locations \cite{Andrews_2014}. 
In addition, the extensive deployment of \acp{TN} with ever-increasing capabilities to deliver high capacity also escalates the overall network energy consumption—a concern in light of current environmental and economic conditions.
Consequently, a principal aim in the development and management of mobile networks is minimizing energy consumption, 
while adhering to \ac{QoS} standards \cite{Lopez_Perez_2022_Survey}.
Fortunately, recent advancements related to \acp{NTN} have offered a promising alternative for extending coverage and enhancing capacity. 
\acp{NTN} utilise airborne vehicles such as \acp{UAV}, \ac{HAPS}, or satellites, 
which serve as \acp{MBS} or relay nodes to enable connectivity for \acp{UE} across the network. 
The primary benefit of \acp{NTN} is their capability to provide broad coverage over large and/or remote areas, 
where establishing terrestrial \acp{MBS} would be either too expensive or difficult.
Among the different options for deployment,
\ac{LEO} satellites are poised to become the leading method to achieve high-capacity connectivity from space \cite{Giordani_2021, Rinaldi_2020}.
In fact, orbiting at altitudes between $200$ and $2000$ kilometers,
their relative proximity to Earth provides enhanced signal strength and lower latency compared to other types of constellations. 
Consequently, this proximity leads to reduced energy needs for launching and lower power usage for transmitting signals to and from the satellite \cite{Rinaldi_2020}.
Spearheaded by companies like SpaceX and OneWeb, 
ongoing \ac{LEO} constellation projects aim to launch thousands of \ac{LEO} satellites around the Earth to establish an ultra-dense constellation. 
Through a collaboration with mobile operators, 
the objective is to create an integrated \ac{TN-NTN}, which can deliver seamless and high-capacity communication services \cite{Ahmmed_2022}, 
as well as guarantee efficient services for \acp{UE} in the future \cite{Geraci_2022, Benzaghta_2022}. 


Therefore, in the context of an integrated \ac{TN-NTN}, 
a better understanding of the collaboration between terrestrial and non-terrestrial tiers and the optimization of radio resource management is needed, 
to fully leverage the perks \ac{LEO} satellite constellation can offer.

\subsection{Related works}
The resource allocation problem in an integrated \ac{TN-NTN} has garnered some interest in the last few years.
The authors in \cite{Rinaldi_GLOBECOM_2020} have introduced a novel radio resource management scheme designed to enhance the performance of an integrated \ac{TN-NTN} system using 5G \ac{3GPP} \ac{NR}. 
The proposed solution leverages multicast subgrouping techniques to group \acp{UE} within the coverage areas of both terrestrial and non-terrestrial \acp{MBS},
ensuring optimal resource allocation to boost overall network performance. 
The aim was to reduce the \ac{UE} ping-pong effect between both tiers, increase the resilience to frequent handovers, and simultaneously improve the quality of service perceived by \acp{UE} located at cell edges, 
ultimately enhancing their data rate experience. 

Cooperative terrestrial-satellite transmissions are also discussed in \cite{Zhang_Feb_2020} where beamforming and frequency reuse are utilised to minimise resource consumption, 
when terrestrial and satellite \acp{MBS} collaboratively deliver services to ground \acp{UE},
grouped by their desired content. 
In contrast, the authors of \cite{Deng_2019} have explored the joint application of multigroup precoding and resource allocation to create multiple \ac{UE} groups that are served by either terrestrial or satellite \acp{MBS} in different time slots, 
aiming to reduce interference, while reusing the same frequency.

{Indeed,
interference management is a critical challenge to achieve efficient resource allocation, which must be addressed to ensure both tiers of the integrated \ac{TN-NTN} can coexist and achieve optimal performance.
To that end, studies such as \cite{Ayoubi_2023,Niloy_2023,Niloy_2024} have provided important perspectives.
The authors of \cite{Ayoubi_2023} have developed a stochastic model to assess interference from terrestrial \acp{MBS} to satellite services in the upper $6$ GHz spectrum. 
The authors have introduced both stochastic and geometrical-stochastic interference models to evaluate the aggregated interference, 
ensuring it remains within acceptable limits for geostationary satellites. 
In \cite{Niloy_2023},
the authors have investigated the interference between terrestrial and non-terrestrial tiers in the $12$ GHz band, 
focusing on the potential impact of 5G \acp{MBS} on satellite systems. 
The authors have developed a simulation framework incorporating real-world deployment data and proposed strategies like exclusion zones and beamforming to mitigate interference and enable coexistence.
Finally, \cite{Niloy_2024} has presented a framework for spectrum sharing between beyond 5G \acp{TN} and satellite systems, 
optimizing \ac{MBS} parameters such as transmit power and beam selection based on contextual factors like weather and satellite trajectory. 
The framework enhances spectrum utilization,
while maintaining interference control, outperforming traditional, static spectrum-sharing policies.}

Load Balancing in \acp{TN} is a topic that has also been well studied over the past few years, 
but in the integrated \ac{TN-NTN} scenario, 
the contributions are limited. 
Typical methods used for load balancing involve the optimization of a utility function through a pricing-based association strategy \cite{Ye_2013,Shen_2014}.
In this line, the authors in \cite{Benzaghta_2022} have examined an integrated \ac{TN-NTN} set up in an urban setting,
and have shown that diverting some of the traffic to \ac{LEO} satellites improves the overall signal quality and decreases outages {accordingly}. {
The authors of \cite{Shamsabadi_2024} and \cite{Sadovaya_2024} have exploited the qualities of \acp{HAPS} to improve the \ac{QoS} of ground \acp{UE}. 
\cite{Shamsabadi_2024} has proposed a fairness optimization approach for integrated \acp{TN-NTN}, 
using \ac{MIMO} beamforming to improve spectral efficiency and manage interference, 
demonstrating superior performance over standalone \acp{TN}.
In contrast, \cite{Sadovaya_2024} has explored multi-connectivity offloading strategies using \acp{UAV} and \ac{HAPS} in \acp{NTN} to reduce task computation latency for time-sensitive applications.}

In \cite{Di_2019} and \cite{Zhang_2022_journal}, 
the authors have investigated the uplink performance of an integrated \ac{TN-NTN}, 
leveraging \ac{LEO} satellites to provide backhaul support to terrestrial \acp{MBS}. 
Both studies aimed to maximise the total uplink data rate, 
while adhering to backhaul capacity limits.
To do this, \cite{Zhang_2022_journal} has taken into account minimum rate requirements,
and has adjusted the bandwidth division between fronthaul and backhaul links,
while \cite{Di_2019} has enhanced \ac{UE} association and power management using matching algorithms.

In our previous work \cite{Alam_2023_2}, 
we have extended the load balancing literature in an integrated \ac{TN-NTN}. 
We have introduced a framework using a pricing-based association which, 
with the assistance of satellites, 
has managed to successfully distribute the network load, increase the maximum network throughput, and enhance the network coverage.

As stated previously,
in the current environmental context,
improving network \ac{EE} and reducing energy consumption have become major objectives.
To this end, mobile operators need to adapt the offered capacity by the terrestrial \acp{MBS} to the rate requirements by dynamically adjusting the number of active \acp{MBS}. 
For example, operating all \acp{MBS} during low-traffic periods is suboptimal, 
as keeping underused or idle terrestrial \acp{MBS} active leads to wasteful use of energy and communication resources \cite{Lopez_Perez_2022_Survey}. 
In the context of an integrated \ac{TN-NTN}, 
it is also beneficial to deactivate some terrestrial \acp{MBS} and handover the \acp{UE} to satellites to decrease the energy consumption of the \ac{TN}. 

\acp{MBS} activation is a well-studied topic in \acp{TN}.
The authors of \cite{Oh2013} have introduced an energy-efficient algorithm that strategically shuts down \acp{MBS} one at a time, 
ensuring they do not overburden neighbouring \acp{MBS}. 
To preserve the \ac{QoS}, 
\cite{Chen2015} has examined the effects of traffic offloading in \acp{HetNet} on energy use,
and proposed a centralized Q-learning method to strike a balance between energy saving and \ac{QoS} satisfaction. 
In addition, the authors of \cite{Shen_2017_Letter} have devised an algorithm that enables \acp{UE} to associate with multiple \acp{MBS} across different frequency bands,
simultaneously optimising the transmit power of the \acp{MBS} to facilitate their shutdown during periods of low traffic. 
The authors of \cite{Teng_2021} have also tackled traffic uncertainties in ultra-dense networks by optimising both \ac{MBS} activation and \ac{UE} association strategies, 
employing chance constraint programming based on statistical traffic data to effectively balance traffic loads and reduce interference.
{Recent studies such as \cite{Kement_2023,ciloglu2024,Song_2024} have explored the integration of \ac{HAPS} to enhance network efficiency in a more dynamic and sustainable manner.
Indeed, \cite{Kement_2023} has investigated how \acp{HAPS} can complement traditional network densification to manage dynamic traffic in urban areas,
demonstrating better energy efficiency and sustainability by using \acp{HAPS} to handle peak demand periods.
In \cite{ciloglu2024}, the authors have tackled the traffic load estimation issue in \ac{HAPS}-assisted networks,
proposing Q-learning algorithms to optimize cell-switching strategies,
improving energy efficiency and making advanced cell-switching methods feasible for vertical heterogeneous networks.}
The authors of \cite{Song_2024} have investigated the challenges of \acp{MBS} activation in an integrated \ac{TN-NTN} using \ac{HAPS}.
The authors have focused on offloading traffic from deactivated terrestrial \acp{MBS} to the \ac{HAPS}, 
mainly using a sorting algorithm, 
which prioritises switching off \acp{MBS} with relatively lower traffic loads.
{Although the studies in \cite{Kement_2023,ciloglu2024,Song_2024} have provided valuable insights and promising results, 
they have not explored the optimization of spectrum sharing and allocation strategies, thereby limiting resource utilization efficiency in integrated \acp{TN-NTN}.}
In our latest work \cite{Alam_2024_2},
we have introduced a framework which leverages the large coverage ability of \ac{LEO} satellites to shutdown terrestrial \acp{MBS} in rural areas during low traffic.
To the best of our knowledge, 
only the authors of \cite{ciloglu2024}, \cite{Song_2024},  and \cite{Alam_2024_2} have considered \acp{NTN} as a solution for meeting coverage and capacity demands while deactivating \acp{MBS}.


\subsection{Contributions}

In this paper, 
we consider an integrated \ac{TN-NTN},
and introduce a novel radio resource management algorithm that adapts to the fluctuating traffic of the network.
The contributions of this paper are summarized as follows:
\begin{itemize}
    \item We develop \texttt{BLASTER} (Bandwidth SpLit, User ASsociaTion, and PowEr ContRol),
    an innovative approach to adaptive radio resource management. 
    \texttt{BLASTER} seamlessly integrates the management of bandwidth allocation and \ac{UE} association between terrestrial and non-terrestrial tiers. 
    It also controls the transmission power and activation of terrestrial \acp{MBS}. 
    The proposed framework is designed to balance network fairness and energy consumption by adjusting to current traffic conditions. 
    Our findings demonstrate that \texttt{BLASTER} can reduce the overall \ac{TN} energy usage by {$67\,\%$} 
    when compared to a standard integrated \ac{TN-NTN} system adhering to \ac{3GPP} guidelines, 
    while notably enhancing the average network \ac{SLT} in times of high traffic demand by at least $6\,\%$ compared to the same \ac{3GPP} benchmark mentioned above.
    \item We propose a framework where the terrestrial and non-terrestrial tiers {orthogonally} share the same frequency band {based on the fluctuating traffic}. 
    In this scenario, we demonstrate that the optimal bandwidth allocation for the non-terrestrial tier is proportional to the fraction of \acp{UE} associated to the satellites, 
    provided that \acp{UE} have the same requirements.
    Through dynamic {allocation} of the resources for both tiers, 
    we are able to optimize the collaboration of both tiers and enhance the network efficiency throughout the day.
    \item 
    By exploiting the special properties of the formulated optimization problem, 
    we design a practical heuristic with an intuitive behaviour to solve the problem with limited complexity, 
    achieving results {that highlight the trade-off between enhancing network \ac{SLT} and reducing energy consumption}.
\end{itemize}

{Note that, 
since recent techno-economic \cite{Li_2023_february,Toka_2024} studies have underscored the cost-effectiveness of the \ac{LEO} satellites compared to \ac{HAPS}, 
which are still suffering from technological, regulatory and economic constraints,
we have focused on \ac{LEO} satellites in this paper, 
but \texttt{BLASTER} is also applicable to \acp{HAPS}.} 

The remainder of the paper is as follows:
Section \ref{seq:system_model} introduces the system model used in this work. 
In Section \ref{seq:problem_formulation}, 
we formulate the problem that we aim to solve, and detail the proposed solutions in Section \ref{seq:designed_solution}.
In Section \ref{seq:simulation_results}, 
we study the performance of both developed frameworks and compare them to standard benchmarks. 
The paper is concluded in Section \ref{seq:conclusion}.

\section{System Model}\label{seq:system_model}

Our study focuses on the \ac{DL} of a cellular network which
consists of $M$ terrestrial \acp{MBS} and $N$ \acp{MBS} installed on a constellation of \ac{LEO} satellites, making a total of $L$ \acp{MBS}.      
They provide service to $K$ \acp{UE} located in the area of study, which is composed of a rural and urban zone.
We refer to the overall network bandwidth as $W$, which is distributed by the mobile network operator across the ground and space-based network. 
We suppose that the network operates within the S band, around 2 GHz, with ground and satellite \acp{MBS} using {orthogonal}, but dynamically adjustable, portions of this band.
Throughout this paper, 
we will use $\mathcal{T}$ and $\mathcal{S}$ to represent the set of terrestrial and satellite \acp{MBS} respectively.
Moreover, $\mathcal{B} = \mathcal{T} \cup \mathcal{S} = \{ 1, \cdots, j, \cdots, L \} $ is the complete set of \acp{MBS}, and $\mathcal{U} = \{1, \cdots, i, \cdots,K \}$ defines the set of \acp{UE}.
For the remainder of the paper, we will designate the Hadamard product with $\odot$.

\subsection{Channel Model}

We model the channel for both terrestrial and satellite links based on the \ac{3GPP} recommendations provided in \cite{3GPPTR38.901,3GPPTR38.811}.

\subsubsection{Terrestrial link channel model}

The large-scale channel gain between a ground-based \ac{MBS} $j$ and a \ac{UE} $i$ is calculated as follows:
\begin{equation}\label{channel_terrestrial}
\begin{split}
     \beta_{ij}  &= \lfloor G_{T_X} + G_{UE}+ PL_{ij}^b + SF_{ij} + PL_{ij}^{tw} + PL_{ij}^{in} \\
     &+ \mathcal{N}\left(0,\sigma^2_p \right)\rceil,
\end{split}
\end{equation}
where all components are expressed in dB, and the operator $\left\lfloor \cdot \right\rceil$ is used to convert dB values to linear. $G_{T_X}$ and $G_{UE}$ represent the \ac{MBS} and the \ac{UE} antenna gains, respectively, $PL_{ij}^b$ is the basic outdoor path loss detailed in \cite[Table 7.4.1-1]{3GPPTR38.901}, 
and $SF_{ij}$ is the shadow fading, which follows a normal distribution, in the dB domain, of mean $0$ and variance $\sigma_{SF}^2$. 
The last three components are related to the \ac{O2I} building penetration loss and are detailed in \cite{3GPPTR38.901}. 
Indeed, $PL_{ij}^{tw}$ represents the loss in signal strength as it penetrates the external wall of the building, $PL_{ij}^{in}$ is the inside loss, which depends on the location of the \ac{UE} inside the building and $\mathcal{N}\left(0,\sigma^2_p \right)$ denotes the random component of the penetration loss, with standard deviation $\sigma_p$.
Note that the values of $PL_{ij}^b$ and $SF_{ij}$ vary based on whether the \ac{UE} $i$ is in \ac{LoS} with \ac{MBS} $j$.
For the terrestrial links, the \ac{LoS} probability for each \ac{UE} is computed based on \cite[Table 7.4.2-1]{3GPPTR38.901}.

\subsubsection{Satellite link channel model}

Similarly, if a satellite \ac{MBS} $j$ is serving a \ac{UE} $i$, we can compute the large-scale channel gain (detailed in \cite{3GPPTR38.811}) as:
\begin{equation}\label{channel_satellite}
     \beta_{ij}  = \left\lfloor G_{T_X} + G_{UE} + PL_{ij}^b + SF_{ij} + CL + PL_{ij}^s + PL_{ij}^e \right\rceil.
\end{equation}
In \eqref{channel_satellite}, $CL$ accounts for clutter loss, which is the attenuation arising from obstacles such as buildings and vegetation surrounding the \ac{UE}.
$PL_{ij}^s$ captures the scintillation loss, reflecting the quick changes in signal amplitude and phase due to ionospheric disturbances.
Finally, $PL_{ij}^e$ refers to the building entry loss, an attenuation that occurs for all \acp{UE} located indoors.

It should be noted that components $SF_{ij}$ and $CL$ are function of -and greatly vary with- the \ac{LoS} conditions, and 
that the \ac{LoS} probability for the satellite links are provided in \cite[Table 6.6.1-1]{3GPPTR38.811}.
Additionally, it should be highlighted that the elevation angle of the satellite also impacts the quality of the channel link.
The elevation angle is the angle between the horizontal plane (the plane parallel to the surface of the Earth at the \ac{UE} location) and the line of sight to the satellite.
If the elevation angle increases, $CL$, $PL_{ij}^e$ and $SF_{ij}$ are greatly reduced.
Given the cartesian coordinates of a satellite $s$ $ \left( x_s,y_s,z_s\right)$ and a \ac{UE} $u$ $ \left( x_{u},y_{u},z_{u}\right)$, we can compute the elevation angle $\theta_u$ as:
\begin{equation}\label{Elevation_angle}
    \theta_u = \arcsin \left( \frac{z_s - z_u}{\sqrt{\left(x_s - x_u \right)^2 + \left( y_s - y_u \right)^2 + \left( z_s - z_u\right)^2      
    }} \right).
\end{equation}

\subsection{\Ac{SINR}}

Considering that each \ac{UE} is either associated to a terrestrial or satellite \ac{MBS}, and there is no interference between the two tiers, as they are allocated different bandwidths orthogonally, we can calculate the large-scale \ac{SINR} for each \ac{UE} $i$ and \ac{MBS} $j$ as:
\begin{equation}\label{SINR}
    \gamma_{ij}  = \frac{ \beta_{ij} p_j}{ \sum\limits_{\substack{ j^' \in \mathcal{I}_j}} \beta_{ij^'} p_{j^'} + \sigma^2}.
\end{equation}
In \eqref{SINR}, $p_j$ denotes the transmit power allocated per \ac{RE} at \ac{MBS} $j$, $\mathcal{I}_j$ indicates the set of \acp{MBS} interfering with the serving \ac{MBS} $j$ and $\sigma^2$ represents the noise power per \ac{RE}.
Moreover, assuming that \ac{MBS} $j$ equally shares its total available bandwidth $W_j$ between the $k_j$ \acp{UE} it is serving, we are able to compute the mean throughput for \ac{UE} $i$ connected to \ac{MBS} $j$ as follows:
\begin{equation}\label{Shannon_data_rate}
    R_{ij}  = \frac{W_j}{k_j} \log_2(1 + \gamma_{ij}).
\end{equation}

\subsection{Energy Consumption Model}

{While \cite{Auer_2011} provided one of the most widely used models for 4G \ac{MBS} energy consumption,
it is not well suited for 5G \acp{MBS} integrating massive \ac{MIMO} technology. 
Consequently, we chose the more recent model proposed in \cite{Piovesan2022} which accounts for massive \ac{MIMO} and carrier shutdown,
and thus fits better to our system design.}
{Note that, in the context of our paper, 
shutdown and sleep mode are considered equivalent,
as both refer to states in which a \ac{MBS} significantly reduces or halts its operations to conserve energy. More generally, it only takes $3$ seconds to shutdown or wake up a \ac{MBS},
while the shutdown duration may range from tens of seconds to minutes or even hours \cite{Lopez_Perez_2024}.}

The energy consumption of a \ac{MBS} can be modelled as the sum of multiple components. 
We denote as the baseline energy consumption, 
the energy used by the components that are kept active in a shutdown \ac{MBS}. 
Then, we denote as the static component, 
the energy consumption that occurs regardless of the level of the \ac{MBS} traffic load. 
The static energy consumption represents the minimum power required to keep essential systems operational and maintain standby readiness.
Finally, the dynamic component refers to the load-dependent energy consumption that fluctuates depending on the \ac{MBS} traffic load. 
Typically, the dynamic component increases whenever a \ac{MBS} increases its transmit power or uses additional transmission resources, 
e.g., more resource blocks. 
For a \ac{MBS} $j$, this model can be formulated as:
\begin{equation}\label{Power_conso_model_v1}
    Q_j(p_j) = P_0 + p_j + \psi_j \vert\vert p_j \vert \vert_0,
\end{equation}
where $P_0$ represents the baseline energy consumption, $\psi_j$ represents the static component and $p_j$ accounts for the dynamic consumption of the \ac{MBS}. 
Also, $\vert\vert \cdot \vert \vert_0$ is a binary-valued function equal to $1$ if the transmit power $p_j$ is greater than 0.

{Regarding the \ac{LEO} satellite, 
the total energy consumption can be expressed as the sum of the inherent energy consumption of the \ac{LEO} satellite,
which accounts for altitude adjustments, GPS navigation, and routing operations,
plus the energy consumption of the \ac{MBS} installed on it,
as detailed in \eqref{Power_conso_model_v1}.}
{We suppose that the satellites in the constellation are solar-powered and well-dimensioned.
Thus, they can handle both the power requirements needed for an operational satellite and manage the telecom equipment added as payload, 
based on the growing adoption of real-world projects such as Starlink or Kuiper.}
The major notations are summarized in Table \ref{Notations}.
\begin{table}[ht] \label{Notations}
    \centering
    \caption{List of Notations}
    \label{Notations}
    \begin{tabular}{|l|c|}
        \hline
        \textbf{Parameter} & \textbf{Symbol} \\
        \hline
        Carrier frequency & $f_c$ \\
        Total number of MBSs & $L$ \\
        Total number of UEs & $K$ \\
        Carrier bandwidth & $W$ \\
        Set of all MBSs & $\mathcal{B}$ \\
        Set of all UEs & $\mathcal{U}$ \\
        Large-scale channel gain for MBS $j$ and UE $i$ & $\beta_{ij}$ \\
        Elevation angle of the satellite relative to the UE & $\theta_u$ \\
        Noise power per RE [dBm] &$\sigma^2$\\
        SINR for MBS $j$ and UE $i$ & $\gamma_{ij}$ \\
        Mean throughput for UE $i$ perceived from MBS $j$ & $R_{ij}$ \\
        Perceived throughput for UE $i$ & $R_i$\\
        Energy consumption for MBS $j$ & $Q_j$ \\
        Transmit power per RE at MBS $j$ & $p_j$ \\
        Maximum transmit power per RE at MBS $j$ & ${p_j}_{\mathrm{max}}$ \\
        Vector with transmit power per RE of every MBS & $p \in \mathbb{R}^L$ \\
        Binary matrix for UE-MBS association & $X \in \mathbb{R}^{K \times L }$ \\
        Fraction of bandwidth allocated to satellite MBSs & $\varepsilon$ \\
        Coverage Threshold [dBm] & $RSRP_{\text{min}}$ \\
        Regularization parameter & $\lambda$ \\
        Hadamard product & $\odot$ \\
        \hline
    \end{tabular}
\end{table}

\vspace{-0.2 cm}
\section{Problem Formulation}\label{seq:problem_formulation}

Our objective is to develop a framework,
which simultaneously increases \ac{UE} performance and reduces network energy consumption by adjusting the resource distribution {between satellite and terrestrial \acp{MBS}} in response to {the hourly} fluctuations in network traffic. 
Specifically, our goal is to achieve proportional fair resource allocation by optimizing the network \ac{SLT}, 
{while limiting the \ac{TN} energy consumption}: 
in fact, the nature of the logarithmic cost function discourages each \ac{MBS} to allocate disproportionate resources to a single UE.
Let us denote as $\varepsilon$ the fraction of the bandwidth allocated to the \ac{LEO} satellites {at a given hour of the day};
then, the bandwidth allocated for an \ac{MBS} $j$ can be written as:
\begin{equation}
\begin{split}
    W_j =
    \begin{cases}
      \varepsilon W & \text{if} \ j\in \mathcal{S},\\
      \left(1 - \varepsilon \right) W & \text{else}.
    \end{cases}
\end{split}
\end{equation}
Also, we define $x_{ij}$, a binary variable, which equals $1$ if \ac{UE} $i$ is associated to \ac{MBS} $j$.
Accordingly, the perceived throughput for \ac{UE} $i$ can then be reformulated as:
\begin{equation}\label{Shannon_data_rate_UE}
    R_{i}  =  \sum_{j \in \mathcal{B}} x_{ij}R_{ij.}
\end{equation}

To achieve our goal
---strike the optimal balance between maximizing the network \ac{SLT} while minimizing the total \ac{TN} energy consumption---,
we optimize the the split of the bandwidth between the terrestrial and non-terrestrial tier, the \ac{UE} association, the \ac{MBS} transmit power as well as the activation of each \ac{MBS}. 
This problem can be summed up as the following:
\begin{maxi!}|s|[2]
{X,\varepsilon, p}{\sum\limits_{i \in \mathcal{U}} \log(R_i) - \lambda \sum_{j\in\mathcal{T}} Q_j(p_j)}{}{}\label{OPT_PB_1}
\addConstraint{x_{ij}}{\in \{0,1\}, \; i \in \mathcal{U}, j \in \mathcal{B}}{}\label{PB1_const1}
\addConstraint{\tilde{\beta}\cdot p }{\geq RSRP_{\rm min} \cdot \mathbbm{1}_{K}}{
}\label{PB1_const2}
\addConstraint{p_j}{\leq {p_{j}}_{\rm max}, \; \forall j \in \mathcal{B}}\label{PB1_const3}
\addConstraint{\varepsilon}{ \in \left[ 0,1 \right]. }{
}\label{PB1_const4}
 \end{maxi!}
Here,  $ X = \left[x_{ij}\right] \in \mathbb{R}^{K \times L }$ represents the \ac{UE}-\ac{MBS} association matrix,  
$ p = \left[ p_1, \dots, p_{L} \right]^T \in \mathbb{R}^{L}$ is the transmit power vector,
{and $\mathbbm{1}_{K} = \left[1,\dots,1 \right]^T \in \mathbb{R}^K$.
Also, $\tilde{\beta} = X \odot \beta$ is a matrix of dimension $K \times L$,
resulting from an element-wise multiplication of matrices $X$ and $\beta$.}
$\lambda$ is a regularization parameter that allows us to control the trade-off between \ac{UE} performance (higher \ac{SLT}) and network energy consumption\footnote{
We discuss how to set $\lambda$ based on the expected \ac{UE} traffic in Sec. \ref{sec:reg_netw_perf}.}.
Constraint \eqref{PB1_const1} states that $x_{ij}$ is a binary variable.
Constraint \eqref{PB1_const2} is the coverage constraint: 
it ensures that the perceived \ac{RSRP} for each \ac{UE} is greater than the set threshold $RSRP_{\rm min}$.
Also, \eqref{PB1_const3} guarantees that the transmit power per \ac{RE} for each \ac{MBS} $j$ does not exceed ${p_j}_{\rm max}$. 
{The indicator variable $x_{ij}$ enforces a unique association, making the problem combinatorial. As highlighted in \cite{Ye_2013}, \ac{UE} association and resource allocation are interdependent. Also, the transmit power of each \ac{MBS} further complicates optimization by affecting signal strength and coverage. Consequently, predicting the behaviour of the utility function becomes challenging because of these interdependencies.}

Moreover, given the fact that the energy consumption model detailed in \eqref{Power_conso_model_v1} is not a continuous function, 
the problem may prove hard to optimize. 
Hence, we approximate it using a $L_1$-$L_2$ penalty function.
The choice and nature of this function push for a sparse solution, 
as shown in \cite{Shi2013}. 
A sparse solution is ideal as it entails shutting down several \acp{MBS},
thus effectively reducing the network energy consumption.
We can then reformulate our initial problem as:
\begin{maxi!}|s|[2]
{X,\varepsilon, p}{\sum\limits_{i \in \mathcal{U}} \log(R_i) - \lambda \left( \vert\vert p\vert\vert_1 + \sum_{j=1}^{L} \psi_j w_j \vert\vert p\vert\vert_2 \right)}{}{}\label{OPT_PB_2}
\addConstraint{\eqref{PB1_const1} - \eqref{PB1_const4},}{}{}\label{PB2_const1}
\end{maxi!}
where $\vert\vert\cdot\vert\vert_1$ and  $\vert\vert\cdot\vert\vert_2$ represent the $L_1$ and $L_2$ norm respectively. 
$w_j$ denotes the power weighting of \ac{MBS} $j$. 
These weights vary inversely with the transmit power of each \ac{MBS}, thus prompting the shutdown of those with lower transmit power.

\section{Designed Solution}\label{seq:designed_solution}

In this section, we first {present} \texttt{BLASTER}, the framework {proposed} to address the optimization problem outlined in \eqref{OPT_PB_2}-\eqref{PB2_const1}. 
Then, we {design} a low-complexity heuristic {based on the special characteristics of the problem}, and provide a comparison for \texttt{BLASTER} w.r.t. performance state-of-the-art benchmarks.
Finally, we {provide an analysis of} the complexity of both solutions.

\subsection{BLASTER}

We adopt the \ac{BCGA} algorithm to solve problem \eqref{OPT_PB_2}-\eqref{PB2_const1}. 
{\ac{BCGA}} is a technique used to maximize a function by iteratively updating its different parameters.
In our case, we begin by optimizing the \ac{UE}-\ac{MBS} association and bandwidth split while considering the transmit power fixed.
Then, we optimize the transmit power at each \ac{MBS} whilst keeping the two previous parameters unchanged.
The full overview is provided in Algorithm \ref{Algorithm}.

\begin{algorithm}
\footnotesize
\caption{BLASTER Framework}\label{Algorithm}
\KwData{K UEs and L MBs.}
\textbf{Initialization:}
s = $0$\;
X: Association done through max-RSRP\;
p: Transmit power set to maximum\;
$\varepsilon = 0.5$\tcp*{Equal bandwidth split}
\textbf{Compute:} $f\left(X,\varepsilon,p\right)$ \tcp{Initial point}
$w = \left[1,\dots,1\right] \in \mathbb{R}^L$\;
Initialize $\alpha \in \mathbb{R}^{K\times L}$, $\mu \in \mathbb{R}^K$\;
Initialize $\eta \in \mathbb{R}^L$\;
Initialize $\delta \in \mathbb{R}$\;
\While{Utility function $f$ has not converged}{
    \tcp{UE Association and bandwidth split}
        \textbf{Compute:} $\tilde{X}(s) = X(s) + \alpha\nabla_{X}f\left(X,p,\varepsilon \right)$ \hspace*{0.18cm}(\ref{X_tilde})\;
        Solve (\ref{Projection_dual_problem}) using gradient projection to obtain $\mu^*$\;
        \textbf{Compute:} \\ $X(s+1) = \max \{ \tilde{X}(s) - \beta \odot  p^{\rm PAD} \odot {\mu^{*}}^{\rm PAD}, 0 \}  (\ref{X_s+1})$\;
    $\varepsilon^* = \frac{K_{\mathcal{S}}}{K}$ \hspace*{5.2cm}(\ref{optimal_epsilon})\;
    \tcp{Power control step}
        \textbf{Compute:} $\tilde{p}(s) = p(s) + \eta \nabla_p f\left(X,p,\varepsilon \right) \quad $ \hspace*{\fill} (\ref{Gradient_power})\;
        \textbf{Compute:} $t = \lambda \cdot \eta \cdot w^T\psi \quad$ \hspace*{2.3cm}(\ref{t})\;
        \textbf{Compute:} $\hat p(s) = \rm\max \bigl\{ 1 - \frac{t}{\vert\vert \tilde{p}(s) \vert \vert_2}, 0 \bigr\}\tilde{p}(s)$\hspace*{0.2cm} (\ref{Solution_proximal_gradient})\;
        \textbf{Compute:} $\tau$ based on (\ref{tau_j})\;
        \textbf{Compute:}
        $p{(s+1)}=\Bigr[ \hat{p}(s) \Bigr]_{\tau}^{p_{\rm max}}$\hspace*{1.75cm}(\ref{power_update})\;
    $w = \left[\frac{1}{p_1 + \delta },\dots,\frac{1}{p_L + \delta }\right]$\;
    \tcp{$\delta$ small constant to avoid numerical instability}
    \textbf{Compute:} $f\left(X(s),\varepsilon,p(s)\right)$ \;
    $s = s+1$\;
}
\Return{X,$\varepsilon$,p}\;
\end{algorithm}

\subsubsection{Utility optimization under fixed transmit power}

Let $f$ represent the utility function we aim to maximize in \eqref{OPT_PB_2}. 
{By relaxing constraint \eqref{PB1_const1} such that $x_{ij} \in \left[0,1 \right]$, we get a convex optimization problem with respect to $X$.}
To tackle this problem, the iterative gradient projection method serves as an ideal solution {\cite{Rosen_1960}}, given its suitability for constrained optimization.
{The gradient projection method} involves calculating the gradient of the objective function and then projecting this gradient onto the viable region delineated by the {problem} constraints.
Adopting the gradient projection method, we can determine the gradient update at {iteration} $s$ as follows:
\begin{equation}\label{X_tilde}
    \tilde{X}(s) = X(s) + \alpha\nabla_{X}f\left(X,p,\varepsilon \right),
\end{equation}
\noindent where $\alpha \in \mathbb{R}^{K \times L }$ is a convenient step-size and $\nabla$ denotes the gradient operator.
The projection into the feasible region can be achieved solving the following problem:
\begin{mini!}|s|[2]
{X(s)}{\frac{1}{2} \vert \vert X(s) - \tilde{X}(s) \vert \vert^2_F}{}{}\label{Projection_problem_X}
\addConstraint{\tilde{\beta} \cdot p \geq RSRP_{\rm min}\cdot \mathbbm{1}_{K},}{}{}\label{Projection_PB_Const1}
\end{mini!}
\noindent where $\vert\vert \cdot \vert \vert_F$ represents the Frobenius norm.
{To} lighten the reading, for the remainder of this section, we omit the {iteration} indices. 
We employ the Lagrange multipliers method to address the projection problem \eqref{Projection_problem_X}-\eqref{Projection_PB_Const1}.
With this in mind, we compute the Lagrangian function associated with the problem {\eqref{Projection_problem_X} -  \eqref{Projection_PB_Const1}}: 
\begin{equation}\label{Lagrange_formula_projection}
\begin{split}
    \mathcal{L}\left( X,\mu \right) &= \frac{1}{2}  \vert \vert X - \tilde{X} \vert \vert^2_F + \left(\tilde{\beta} \cdot p - RSRP_{\rm min} \cdot \mathbbm{1}_{K}\right)^T \mu\\
    &= \frac{1}{2}  \vert \vert X \vert \vert^2_F - \mathrm{Tr}\left( X^T \tilde{X} \right) + \frac{1}{2}  \vert \vert \tilde{X} \vert \vert^2_F + \left( \tilde{\beta} \cdot p \right)^T \mu \\ 
    &- \left( RSRP_{\mathrm{min}} \cdot \mathbbm{1}_{K}\right)^T \mu,
\end{split}
\end{equation}
where $\mu \in \mathbb{R}^K$ is the Lagrange multiplier associated with constraint \eqref{PB1_const2}.
Calculating the gradient of \eqref{Lagrange_formula_projection} with respect to $X$, we obtain:
\begin{equation}\label{Gradient_X_Lagrange_formula_projection}
\begin{split}
    \nabla_X \mathcal{L}\left(X,\mu \right) = X - \tilde{X} + \beta \odot \underbrace{ \left( \mathbbm{1}_{K} \cdot p^T \right)}_{:= p^{\rm PAD}} \odot \underbrace{\left( \mu \cdot \mathbbm{1}_{L}^T \right)}_{:=\mu^{\rm PAD}}.
\end{split}
\end{equation}
Fixing the dual variable, we are able to determine the minimal point for the gradient, which is obtained for:
\begin{equation}\label{X_opt_formula}
\begin{split}
X^\star = \max \{ \tilde{X} - \beta \odot  p^{\rm PAD} \odot \mu^{\rm PAD}, 0 \}.
\end{split}
\end{equation}
Subsequently, we can introduce the Lagrangian dual function, formulated as:
\begin{equation}\label{Dual_definition}
\begin{split}
    &\mathcal{D}\left(\mu\right) = \max\limits_{X} \mathcal{L}\left(X,\mu \right).
\end{split}
\end{equation}
\begin{proposition}\label{Propostion_Trace}
We can rewrite $\mathcal{D}\left(\mu\right)$ as:
\begin{equation}\label{Dual_definition_2}
\begin{split}
    \mathcal{D}\left(\mu\right) &= \frac{1}{2}  \vert \vert X^\star \vert \vert^2_F - \mathrm{Tr}\left( X^\star \left[ \tilde{X} - \beta \odot  p^{\rm PAD} \odot \mu^{\rm PAD} \right]^T \right) \\ 
    &- \left( RSRP_{\mathrm{min}} \cdot \mathbbm{1}_{K}\right)^T \mu.
\end{split}
\end{equation}
\end{proposition}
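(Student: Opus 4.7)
The plan is to substitute the inner maximiser $X^\star$ back into the Lagrangian $\mathcal{L}(X,\mu)$ of \eqref{Lagrange_formula_projection} and then tidy up the $\mu$-dependent coupling term so that it collects with $-\mathrm{Tr}(X^T\tilde X)$ into the single trace expression appearing in \eqref{Dual_definition_2}. The key ingredient I would exploit is that the structure already exposed in \eqref{Gradient_X_Lagrange_formula_projection} has packaged all of the interaction between $X$ and $\mu$ into the Hadamard product $\beta\odot p^{\mathrm{PAD}}\odot \mu^{\mathrm{PAD}}$, so the same object should re-appear inside the trace in the dual.

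The first concrete step is to rewrite the coupling term $(\tilde\beta\cdot p)^T\mu$ as a trace involving $X$. Since $\tilde\beta=X\odot\beta$, expanding componentwise gives
\begin{equation*}
(\tilde\beta\cdot p)^T\mu=\sum_{i,j}\mu_i\,X_{ij}\,\beta_{ij}\,p_j,
\end{equation*}
and recognising that $(p^{\mathrm{PAD}})_{ij}=p_j$ and $(\mu^{\mathrm{PAD}})_{ij}=\mu_i$, this double sum is precisely $\mathrm{Tr}\bigl(X^T(\beta\odot p^{\mathrm{PAD}}\odot\mu^{\mathrm{PAD}})\bigr)$, which by the standard identity $\mathrm{Tr}(M)=\mathrm{Tr}(M^T)$ equals $\mathrm{Tr}\bigl(X\,(\beta\odot p^{\mathrm{PAD}}\odot\mu^{\mathrm{PAD}})^T\bigr)$.

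Next I would plug $X=X^\star$ from \eqref{X_opt_formula} into $\mathcal{L}$. The two trace terms, $-\mathrm{Tr}(X^{\star\,T}\tilde X)$ (equivalently $-\mathrm{Tr}(X^\star\tilde X^T)$) and the rewritten coupling $+\mathrm{Tr}\bigl(X^\star(\beta\odot p^{\mathrm{PAD}}\odot\mu^{\mathrm{PAD}})^T\bigr)$, have a common factor of $X^\star$ and can be merged into the single term
\begin{equation*}
-\mathrm{Tr}\Bigl(X^\star\bigl[\tilde X-\beta\odot p^{\mathrm{PAD}}\odot\mu^{\mathrm{PAD}}\bigr]^T\Bigr),
\end{equation*}
which is exactly the middle term in \eqref{Dual_definition_2}. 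The constant $\frac{1}{2}\|\tilde X\|_F^2$ does not depend on $\mu$ (nor on $X^\star$) and is therefore an additive constant in the dual, which explains its absence from the stated form: it can be dropped without affecting the subsequent maximisation over $\mu$.

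The main obstacle I anticipate is not the algebra itself but justifying that the closed form $X^\star=\max\{\tilde X-\beta\odot p^{\mathrm{PAD}}\odot\mu^{\mathrm{PAD}},0\}$ in \eqref{X_opt_formula} is truly the inner extremiser. This requires recognising that the Lagrangian is separable and quadratic in the entries of $X$ with a unit coefficient on $X_{ij}^2$, so once the nonnegativity of the relaxed association variables $x_{ij}\in[0,1]$ is accounted for through coordinatewise KKT conditions, the entrywise clipping at zero is correct; the upper bound $x_{ij}\le 1$ is treated subsequently in the algorithm and is not binding for the inner problem used to build the dual. With this point settled, the proposition follows by the substitution and trace manipulations outlined above.
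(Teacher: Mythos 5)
Your proposal is correct and follows essentially the same route as the paper's Appendix proof: substitute the closed-form $X^\star$ into the Lagrangian, convert the coupling term $(\tilde\beta\cdot p)^T\mu$ into $\mathrm{Tr}\bigl(X^\star(\beta\odot p^{\rm PAD}\odot\mu^{\rm PAD})^T\bigr)$ by componentwise expansion, and merge it with $-\mathrm{Tr}({X^\star}^T\tilde X)$ into the single trace of \eqref{Dual_definition_2}. Your explicit remark that the $\mu$-independent constant $\tfrac{1}{2}\Vert\tilde X\Vert_F^2$ is dropped because it does not affect the dual optimization is in fact cleaner than the paper, which omits this justification.
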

\begin{proof}
    Please refer to Appendix A.
\end{proof}
\noindent Also, we notice that for any matrix $A$:
\begin{equation}\label{tr_manipulation}
    \frac{1}{2} \Big\vert \Big\vert \max \{A,0\} \Big\vert \Big\vert^2_F - \mathrm{Tr}\left( \max \{ A,0\} A^T\right)  =  - \frac{1}{2} \Big\vert \Big\vert \max \{A,0\} \Big\vert \Big\vert^2_F
\end{equation}
{Then,} combining Proposition \ref{Propostion_Trace} with \eqref{tr_manipulation}, we are able to rewrite the dual problem associated to the projection step problem {\eqref{Projection_problem_X} - \eqref{Projection_PB_Const1}} as the following:
\begin{mini!}|s|[2]
{\mu}{ \frac{1}{2} \vert \vert X^\star\vert \vert^2_F \  + \ \left( RSRP_{\mathrm{min}} \cdot \mathbbm{1}_{K}\right)^T \mu }{}{}\label{Projection_dual_problem}
\addConstraint{\mu \leq 0.}{}{}\label{Projection_dual_problem_const1}
\end{mini!}
To solve problem {\eqref{Projection_dual_problem}-\eqref{Projection_dual_problem_const1} detailed above}, we can utilize the gradient projection problem, as the constraint is a simple projection into the non-positive orthant.
Once we find the solution $\mu^*$, we can recover the optimal solution to problem  \eqref{Projection_problem_X}-\eqref{Projection_PB_Const1} by: 
\begin{equation}\label{X_s+1}
X(s+1) \triangleq X^\star = \max \{ \tilde{X}(s) - \beta \odot  p^{\rm PAD} \odot {\mu^{*}}^{\rm PAD}, 0 \}.    
\end{equation}
We repeat those iterations until convergence to obtain the optimal association setting $X^*$.

Once this is done, we have to optimally split the bandwidth between both terrestrial and non-terrestrial tiers. To this aim, we {rewrite the mean \ac{UE} throughput, defined in \eqref{Shannon_data_rate}, as follows:}
\begin{equation}
\label{data_rate_with_split}
\begin{split}
    R_{ij} =
    \begin{cases}
      \varepsilon r_{ij} & \text{if} \ j\in \mathcal{S},\\
      \left(1 - \varepsilon \right) r_{ij} & \text{otherwise},
    \end{cases}
\end{split}
\end{equation}
{where}
$$r_{ij} = \frac{W}{k_j} \log_2\left( 1 + \gamma_{ij} \right).$$
Note that, since in \eqref{data_rate_with_split} both the \ac{MBS} transmit power and the noise power scale linearly with the bandwidth, $\gamma_{ij}$ is unaffected by {the bandwidth split between satellite and \ac{TN}}.

\begin{proposition}\label{Proposition_optimal_epsilon}
{The optimal bandwidth allocation for the non-terrestrial tier {linearly increases} with the {fraction} of \acp{UE} associated to a satellite \ac{MBS} in the network, provided that all \acp{UE} have the same requirements, i.e.: 
\begin{equation}\label{opt_epsilon}
\varepsilon^* = \frac{K_{\mathcal{S}}}{K}    
\end{equation}
where $K_{\mathcal{S}}$ denotes the {number} of \acp{UE} associated to a satellite within the network.}
\end{proposition}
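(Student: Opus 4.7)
The plan is to isolate the $\varepsilon$-dependence of the objective in \eqref{OPT_PB_2} and solve a one-dimensional concave maximization. First, the regularization term $\lambda\left(\vert\vert p\vert\vert_1 + \sum_j \psi_j w_j \vert\vert p\vert\vert_2\right)$ does not involve $\varepsilon$, so for fixed $(X, p)$ the only relevant part of the objective is $\sum_{i \in \mathcal{U}} \log R_i$. Moreover, as highlighted immediately after \eqref{data_rate_with_split}, each $\gamma_{ij}$ is invariant to the split because signal and noise scale identically with bandwidth, hence each $r_{ij}$ depends only on $(X,p)$ and may be treated as a constant in this step.

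Next, I would partition $\mathcal{U}$ into $\mathcal{U}_{\mathcal{S}}$ (UEs served by a satellite, with cardinality $K_{\mathcal{S}}$) and $\mathcal{U}_{\mathcal{T}} = \mathcal{U}\setminus\mathcal{U}_{\mathcal{S}}$. Using \eqref{Shannon_data_rate_UE} together with \eqref{data_rate_with_split}, for every $i \in \mathcal{U}_{\mathcal{S}}$ one has $R_i = \varepsilon\, r_{i,j(i)}$ and for every $i \in \mathcal{U}_{\mathcal{T}}$ one has $R_i = (1-\varepsilon)\, r_{i,j(i)}$, where $j(i)$ denotes the unique MBS for which $x_{i,j(i)} = 1$. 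Taking logarithms and collecting all $\varepsilon$-independent terms into a constant $C$, the objective reduces to
\[
g(\varepsilon) = K_{\mathcal{S}}\log\varepsilon + (K - K_{\mathcal{S}})\log(1-\varepsilon) + C.
\]

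The function $g$ is strictly concave on $(0,1)$, so the stationarity condition is both necessary and sufficient. Setting $g'(\varepsilon) = K_{\mathcal{S}}/\varepsilon - (K-K_{\mathcal{S}})/(1-\varepsilon) = 0$ and rearranging yields $\varepsilon^* = K_{\mathcal{S}}/K$, which lies in $[0,1]$ and hence automatically satisfies \eqref{PB1_const4}.

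I do not foresee a significant technical obstacle; the entire argument is a clean one-variable exercise once $X$ and $p$ are fixed. The only subtle point worth emphasising is why the hypothesis that ``all UEs have the same requirements'' is needed: our utility places a unit weight on every $\log R_i$, so the UE-specific rates $r_{i,j(i)}$ dissolve into the additive constant $C$ and only the counts $K_{\mathcal{S}}$ and $K - K_{\mathcal{S}}$ survive in $g$. With a weighted utility $\sum_i \omega_i \log R_i$, the same argument would instead produce $\varepsilon^* = \sum_{i\in\mathcal{U}_{\mathcal{S}}} \omega_i \big/ \sum_{i\in\mathcal{U}} \omega_i$, i.e.\ the satellite tier's \emph{weight} share rather than its UE share; the equal-requirement assumption is precisely what reduces this ratio to $K_{\mathcal{S}}/K$.
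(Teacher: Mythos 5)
Your proposal is correct and follows essentially the same route as the paper: exploiting the fact that $\gamma_{ij}$ (hence $r_{ij}$) is invariant to the split, the paper likewise differentiates the sum-log objective with respect to $\varepsilon$ and sets $\nabla_\varepsilon f = K_{\mathcal{S}}/\varepsilon - (K-K_{\mathcal{S}})/(1-\varepsilon) = 0$ to obtain $\varepsilon^* = K_{\mathcal{S}}/K$, exactly your stationarity condition. Your explicit concavity check and the weighted-utility remark are useful refinements the paper leaves implicit, but the core argument is identical.
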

\begin{proof}
    First, we compute the gradient of our utility function $f$ with respect to $\varepsilon$:
\begin{equation}\label{Gradient_f_epsilon}
\begin{split}
      & \nabla_\varepsilon f\left(X,p,\varepsilon \right) = \frac{\partial}{\partial \varepsilon} \left( \sum_{i =1}^K \log \left( R_i \right)  \right) 
      = \sum_{i =1}^K \frac{\partial}{\partial \varepsilon} \log \left( R_i \right) \\
      &= \sum_{i =1}^K \frac{\frac{\partial}{\partial \varepsilon} \left(R_i\right)}{R_i} 
      = \sum_{i =1}^K \frac{\frac{\partial}{\partial\varepsilon} \left[ \sum\limits_{j \in \mathcal{S}} \varepsilon x_{ij} r_{ij} +  \sum\limits_{j \in \mathcal{T}} \left(1 - \varepsilon\right) x_{ij} r_{ij} \right]}{R_i} \\
      &= \sum_{i =1}^K \frac{ \sum\limits_{j \in \mathcal{S}}  \frac{\partial}{\partial\varepsilon} \left[ \varepsilon x_{ij}r_{ij} \right] + \sum\limits_{j \in \mathcal{T}}  \frac{\partial}{\partial\varepsilon} \left[ \left( 1 - \varepsilon \right) x_{ij}r_{ij} \right]   }{R_i}\\
      & 
      = \sum_{i =1}^K \left[ \frac{\sum\limits_{j \in \mathcal{S}} x_{ij}r_{ij} - \sum\limits_{j \in \mathcal{T}} x_{ij}r_{ij}}{R_i}   \right].
\end{split}
\end{equation}
{Let us denote as} $\mathcal{U_S}$ and $\mathcal{U_T}$ the sets of \acp{UE} served by the satellites and terrestrial \acp{MBS} respectively, such that $\mathcal{U} = \mathcal{U_S}\ \cup\ \mathcal{U_T}$; {then, by} setting \eqref{Gradient_f_epsilon} to $0$, we can strike the optimal splitting point for the bandwidth {as follows}:
\begin{equation}\label{optimal_epsilon}
\begin{split}
        &\nabla_\varepsilon f\left(X,p,\varepsilon \right) = 0 \Leftrightarrow \sum_{i \in \mathcal{U_S}} \frac{1}{\varepsilon} + \sum_{i \in \mathcal{U_T}} \frac{-1}{1 - \varepsilon} = 0 \\ 
        &\Leftrightarrow \frac{K_{\mathcal{S}}}{\varepsilon} - \frac{K - K_{\mathcal{S}}}{1 - \varepsilon} = 0 \Leftrightarrow \varepsilon^* = \frac{K_{\mathcal{S}}}{K}.
\end{split}
\end{equation}
\end{proof}

\subsubsection{Transmit power optimization under fixed association}

Once we have solved the \ac{UE} association and bandwidth allocation problem, we consider those two parameters as constant and focus on adjusting the transmit power at each terrestrial \ac{MBS} to maximize the utility function introduced in \eqref{OPT_PB_2}.
The problem of optimizing the transmit power can thus be articulated as:
\begin{maxi!}|s|[2]
{p}{\sum\limits_{i \in \mathcal{U}} \log(R_i) - \lambda \left( \vert\vert p\vert\vert_1 + \sum_{j=1}^{L} \psi_j w_j \vert\vert p\vert\vert_2 \right)}{}{}\label{OPT_PB_3}
\addConstraint{\eqref{PB1_const2} - \eqref{PB1_const3}}{}{}\label{PB3_const1}
\end{maxi!}
Owing to the discontinuous nature of the $L_1$ norm, we need to employ the iterative proximal gradient method \cite{Boyd_2014} for the resolution of \eqref{OPT_PB_3}-\eqref{PB3_const1}.
{To do that, we first compute the gradient update at iteration $s$ as follows:}
\begin{equation}\label{Gradient_power}
\tilde{p}(s) = p(s) + \eta \nabla_p f\left(X,p,\varepsilon \right)
\end{equation}
where $\eta \in \mathbb{R}^{L}$ is a pertinent step-size.
{Then,} as outlined in \cite{Boyd_2014}, the proximal gradient method updates $p$ by addressing the problem described below:
\begin{mini!}|s|[2]
{p}{\frac{1}{2} \vert \vert \tilde{p}(s) - p(s) \vert \vert^2_2 + t \vert \vert p(s) \vert \vert_2}{}{}\label{Proximal_problem}
\end{mini!}
with
\begin{equation}\label{t}
    t = \lambda \cdot \eta \cdot w^T\psi.
\end{equation}
A closed-form solution to this problem, known as block-soft thresholding \cite[Sec. 6.5.1]{Boyd_2014} is expressed as:
\begin{equation}\label{Solution_proximal_gradient}
\hat p(s) = \rm\max \bigl\{ 1 - \frac{t}{\vert\vert \tilde{p}(s)
 \vert \vert_2}, 0 \bigr\}\tilde{p}(s).
\end{equation}
After updating the transmit power vector, it is necessary to project it into a feasible region to ensure compliance with constraints \eqref{PB1_const2} and \eqref{PB1_const3}.
The maximum transmit power per \ac{RE} naturally sets the upper limit of our feasible region. 
To determine the lower boundary, we apply the minimal coverage constraint.
Indeed, as indicated by \eqref{PB1_const2}, each \ac{UE} connected to a \ac{MBS} $j$ must receive a signal power that exceeds the minimum required \ac{RSRP} value, $RSRP_{\rm min}$.
This can be rewritten as:
\begin{equation}
\begin{split}
& \forall i \in \mathcal{U}_j,\quad p_j \quad {\geq} \quad \frac{RSRP_{\rm min}}{ \beta_{ij}},
\end{split}
\end{equation}
with $\mathcal{U}_j$ representing the set of \acp{UE} {connected} to the \ac{MBS} $j$. Consequently, we can define the lower bound of the feasible region for each \ac{MBS} $j$ as:
\begin{equation}\label{tau_j}
\tau_j=\max _{i \in \mathcal{U}_j}\left(\frac{RSRP_{\min }}{\beta_{ij}}\right).
\end{equation}
{Finally, the transmit power update at the iteration $s$ can be computed as follows:} 
\begin{equation}\label{power_update}
p{(s+1)}=\Bigr[ \hat{p}(s) \Bigr]_{\tau_j}^{p_{\rm max}}.
\end{equation}
After the algorithm yields $p(s+1)$, we adjust the power weights $w_j$ following the re-weighting algorithm described in \cite{Shen_2017_Letter}.
{Indeed, we update the vector $w$ as follows:
\begin{equation}\label{weight_update}
w(s+1) = \left[\frac{1}{p_1 + \delta },\dots,\frac{1}{p_L + \delta }\right]\;
\end{equation}
where $\delta$ is a parameter introduced to avoid numerical instability.
By updating the vector $w$ based on \eqref{weight_update}, we reduce the impact on the utility function of the \acp{MBS} that have a large transmit power to provide continuous coverage to the associated \acp{UE}. In contrast, this weighing method effectively pushes \acp{MBS} that are not providing traffic to decrease the transmit power further until they shutdown.}

\subsection{Heuristic development}
\noindent Given that {\texttt{BLASTER}} is composed of two iterative methods (BCGA and gradient projection method to solve \eqref{Projection_dual_problem}), {we have developed a low computational complexity heuristic based on the domain expertise}.
The {designed} heuristic {has} a structure similar to that of \texttt{BLASTER}. Indeed, {it} first deals with the \ac{UE}-\ac{MBS} association phase and allocates the bandwidth resources accordingly.
Then, {it deactivates} part of the \acp{MBS} in the network and/or updates their transmit power.
The {heuristic algorithm} iteratively repeat{s these} three steps until {the} selected utility function{, i.e., the network \ac{SLT}} converges.
Specifically, {the three} steps {of the proposed low computational complexity heuristic} can be described as the following:
\begin{itemize}
    \item \textit{\ac{UE} Association:} To maximize the number of terrestrial \acp{MBS} that can be switched off during low-traffic hours ($0$ AM - $7$ AM), it is desirable to maximize the share of the traffic served by the non-terrestrial tier. Hence, the proposed heuristic associates each \ac{UE} that has an \ac{RSRP} larger than $RSRP_{\min }$ to the satellite. For each remaining \acp{UE} that is yet to be associated to a \ac{MBS}, we collect the perceived \acp{RSRP} from all the \acp{MBS} that are providing an \ac{RSRP} greater than $\geq RSRP_{\rm min}$. Then, we average these \ac{RSRP} measurements and rank each \ac{UE} based on the calculated average from worst to best. Then, starting from the \ac{UE} with the poorest average, we associate it with the \ac{MBS} providing the largest throughput with the aim of maximizing the \ac{SLT}.\newline 
    In the case of a high-traffic scenario, the satellite merely acts as an umbrella over the ground network to provide service for \acp{UE} that have no service, and potentially facilitate the distribution of the load. 
    In this regard, the \ac{UE} association process is quite similar to the low-traffic scenario as we rank each \ac{UE} based on their average perceived \ac{RSRP} and associate to the \ac{MBS} providing the best throughput\footnote{Note that the association is done only once for each \ac{UE} and does not change even if the associated \ac{MBS} is not providing the best throughput anymore {due to the increased number of served \acp{UE}}.}.
    \item \textit{Bandwidth split:} The bandwidth is split according to the expression derived in \eqref{optimal_epsilon}.
    \item \textit{\ac{MBS} shutdown and power control:}
    In low-traffic scenario, {the proposed heuristic shuts} down all \acp{MBS} serving less \acp{UE} than a set threshold $T_{UE}$, provided that {these} \acp{UE} can be {successfully} handed over to neighbouring \acp{MBS}.
    We also reduce the {\ac{MBS}} transmit power as long as the coverage constraint \eqref{PB1_const2} is satisfied for the connected \acp{UE} (feasible region computed in \eqref{tau_j}).
    On the other side, for high traffic, we only shut down inactive \acp{MBS} and reduce the transmit power similarly to the low-traffic scenario.
\end{itemize}
The complete heuristic is detailed in Algorithm \ref{Algorithm_heuristic}.
\begin{algorithm}
\footnotesize
\caption{Heuristic}\label{Algorithm_heuristic}
\KwData{K UEs and L MBs.}
Initialization\;
s = $0$\;
X: Association done through max-RSRP\;
p: Transmit power set to maximum\;
$\varepsilon = 0.5$\tcp*{Equal bandwidth split}
\textbf{Compute:} $SLT$ \tcp{Initial point}
\While{$ SLT$ has not converged}{
    \tcp{UE Association and bandwidth split}
    \If{low-traffic hour:}{
        \For{all \acp{UE} $u$}{
            \If{RSRP perceived from satellite $\geq RSRP_{\rm min}$}{
                Associate \ac{UE} $u$ to the satellite\;
            }   
        } 
    }
    Rank each UE according to the average perceived RSRP\;
    \For{all UEs $u$ uncovered}{
        $\mathcal{C}$ : List of \acp{MBS} providing an \ac{RSRP} greater than $RSRP_{\rm min}$ for \ac{UE} $u$ \; 
        Associate $u$ to the \ac{MBS} $\in \mathcal{C}$ providing the largest throughput. 
    }   
    $\varepsilon^* = \frac{K_{\mathcal{S}}}{K}$ \hspace*{5.2cm}(\ref{optimal_epsilon})\;
    \tcp{Power control step}
    \If{low-traffic hour:}{
        \For{all \acp{MBS} $b$ serving less than $T_{\mathrm{UE}}$}{
            \If{we can handover every \ac{UE} served by $b$}{
                Offload each \ac{UE} to a neighboring \ac{MBS}\;
                Shutdown \ac{MBS} $b$\;
            }   
        } 
    }
    \tcp{For remaining active \acp{BS}:}
    \textbf{Compute:} $\tau$ based on (\ref{tau_j})\;
    \tcp{Reduce Tx Power while ensuring coverage for all \acp{UE}:}
    \textbf{Compute:} $p{(s+1)}= \tau $\;
    $s = s+1$\;
    }
\textbf{Return} X,$\varepsilon$,p\;
\end{algorithm}
\subsection{Complexity Analysis}\label{subseq:deriving_complexity}
Concerning the algorithmic complexity, we can compare both solutions that we have developed:
\subsubsection{\texttt{BLASTER} complexity}
First, {let's analyse the} complexity of the operations {in an} iteration of \texttt{BLASTER}.
For the optimization of $X$, the computation of the gradient step (\ref{X_tilde}) has a complexity of $\mathcal{O}\left( K \times L \right)$. In addition, we solve a gradient projection problem (\ref{Projection_dual_problem}) at the end {of the \ac{UE} association phase using an iterative method. Each iteration of the gradient projection method is executed with a complexity of $\mathcal{O}\left( K \times L \right)$.} {We} denote {as} $I_\mu$ the number of iterations needed to reach a stopping criterion for the gradient projection method. {The stopping criterion is either the convergence of the utility function (\ac{SLT}) or the completion of a given number of iterations.} {Then,} the overall complexity for the optimization of $X$ is $\mathcal{O}\left( K \times L 
 + I_\mu \times K \times L \right) = \mathcal{O}\left(I_\mu \times K \times L \right)$.
To obtain the optimal split of the bandwidth as per (\ref{optimal_epsilon}), {our algorithm computes the fraction of} the \acp{UE} associated to the satellites, which is an operation of complexity $\mathcal{O}\left( K \right)$.
{Finally}, the transmit power vector optimization has an overall complexity of $\mathcal{O}\left( K \times L \right)$.
Taking all of this into account, if $I_{\rm blaster}$ represents the number of iterations needed to meet a stopping criterion for \texttt{BLASTER} {(specified above)}, the complexity of the framework is $\mathcal{O}\left(I_{\rm blaster} \times I_\mu \times K \times L \right)$.

\subsubsection{Heuristic complexity}
For {modelling} the heuristic {complexity}, we consider separately low and high traffic cases.
{In the low-traffic scenario, the proposed algorithm associates to a satellite each \ac{UE} perceiving an RSRP greater than $\geq RSRP_{\rm min}$ from a satellite. This results in having to do the ranking and \ac{MBS} association process detailed above for $K- K_{\mathcal{S}}$ \acp{UE}.
Therefore, in low traffic, the complexity of the \ac{UE} association phase is 
$\mathcal{O}\left( (K - K_{\mathcal{S}}) \times L \right)$.}
Also, the complexity of the operations for the transmit power optimization and \ac{BS} activation is $\mathcal{O}\left( K \times L \right)$.
Therefore, the overall complexity for one iteration in a low-traffic scenario is $\mathcal{O}\left( K \times L \right)$.
For the high-traffic case, the complexity for those steps is $\mathcal{O}\left(K \times L \right)$, explained by the fact that the \ac{UE} association process is done for each of the \acp{UE}.
Finally, denoting by $I_H$ the number of iterations needed for the utility function to converge, the overall complexity of the heuristic is $\mathcal{O}\left(I_{H} \times K \times L \right)$. 

\section{Simulation Results \& Analysis}
\label{seq:simulation_results}

In this section, we assess the performance of the proposed solution under different traffic conditions. 
Specifically, in Sec. \ref{sec:setting}, we describe the simulation settings and the baseline algorithms, and in Sec. \ref{sec:reg_netw_perf} we discuss the impact of the regularization parameter $\lambda$, introduced in \eqref{OPT_PB_1}, on the targeted performance metrics. 
In Sec. \ref{subseq:satellite_role_analysis}, we describe how the algorithms under investigation leverage the satellite network resources to offload the \ac{TN}.
Finally, in Secs. \ref{sec:analysis_SLT} and \ref{sec: TNPC}, we analyse the performance of the solutions under tests in terms of network sum-log throughput and \ac{TN} energy consumption, respectively.

\subsection{Simulation settings and benchmarks}\label{sec:setting}
{
All the results and analyses presented in the following section have been conducted over a $24$-hour observation window,
with snapshots of the network taken at the start of each hour,
and generated using a custom-built system-level simulator, 
which adheres to \ac{3GPP} recommendations \cite{3GPPTR36.763, 3GPPTR38.811, 3GPPTR38.821,3GPPTR38.901, 3GPPTR36.814, 3GPPTR36.931}.}
The area of study is approximately $2500$ $\mathrm{km}^2${, 
which} matches the beam diameter of a \ac{LEO} satellite \cite{3GPPTR38.821}, 
{and includes both an urban and a rural area,
as depicted in {Fig.} \ref{Hybrid_layout_image}}.
In this area, 
we model the traffic variations by changing the number of active \acp{UE} in the network. 
Precisely, the number of \acp{UE} deployed in the network scales according to the daily downlink traffic load pattern presented in \cite{Chen_2011}. 
The number of \acp{UE} decreases during the night, until it reaches its minimum of $400$ at $5$ AM, and increases in high traffic, reaching up to $10000$ at $8$ PM.
The \acp{UE} are deployed uniformly across the entire study area, with a higher density of deployment in the urban area relatively to the rural one.
The terrestrial \acp{MBS} in both urban and rural areas are deployed in a hexagonal grid layout \cite{3GPPTR36.942}, with a higher density of \acp{MBS} in the urban area.
In our simulation setting, we consider that $80 \%$ of {the} \acp{UE} are indoor \cite{3GPPTR38.901}.
\begin{figure}[ht]
    \centering
    \includegraphics[width = \linewidth]{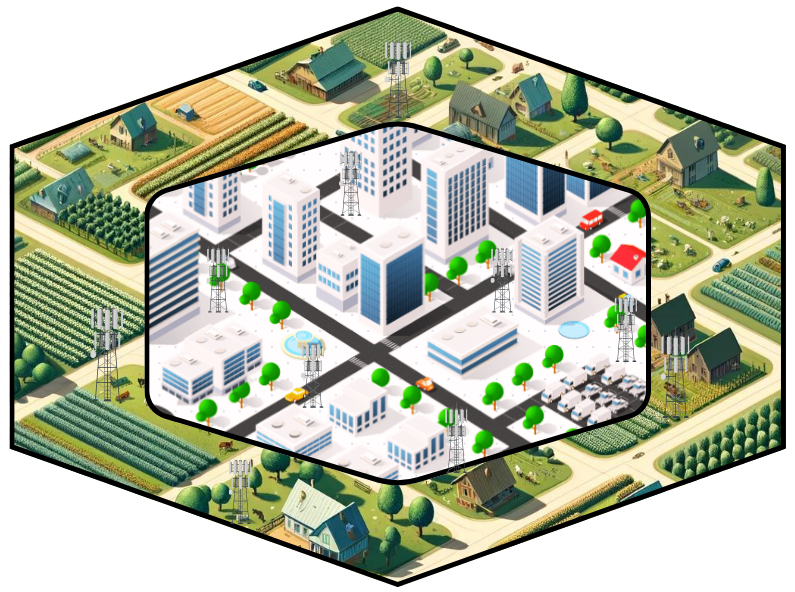}
    \caption{Representation of area of study.}
    \label{Hybrid_layout_image}
    \vspace{-0.2cm}
\end{figure}

\subsubsection{Satellite deployment}

In our study, we consider a \ac{LEO} satellite constellation employing earth-fixed beams \cite[Section 4.6]{3GPPTR38.811},
implying that the satellites use beam pointing mechanisms (mechanical or electronic steering feature) to compensate for their mobility and cover a given fixed region. 
Specifically, we assume in our study that a satellite has a seven beam configuration, as depicted in {Fig.} \ref{satellite_mobility_image}, 
and that at any given time, 
a satellite in the constellation is able to serve the region of interest in red.

As stated in Section \ref{seq:system_model}, the elevation angle of a satellite is an important parameter, which impacts the quality of the signal perceived from the satellite. 
Taking this into account, we embrace the mobility of the \ac{LEO} satellites by assessing network performance at three distinct positions (P1, P2, P3), 
as also illustrated in {Fig.} \ref{satellite_mobility_image}:
\begin{enumerate}[label=P\arabic*)]
    \item {We make the first performance assessment when the satellite is located at the nadir point of the beam adjacent to our area of study, 
    which is located at a horizontal distance of $50$ kms (as well as a total distance of approximately $603$ kms) from the center of the area of study}, 
    with an elevation angle (for a \ac{UE} located at the center of the central beam) approximately equal to $84^\circ$.
    \item The second assessment is made when the satellite is at the nadir point, 
    exactly above the center of our area of study.
    \item Similarly to the first measurement, 
    we make the assessment when the satellite is $50$ kms away from the center of the area of study{, 
    at the nadir point of the next contiguous beam}.
\end{enumerate}
\begin{figure}[ht]
    \centering
    \includegraphics[width = \linewidth]{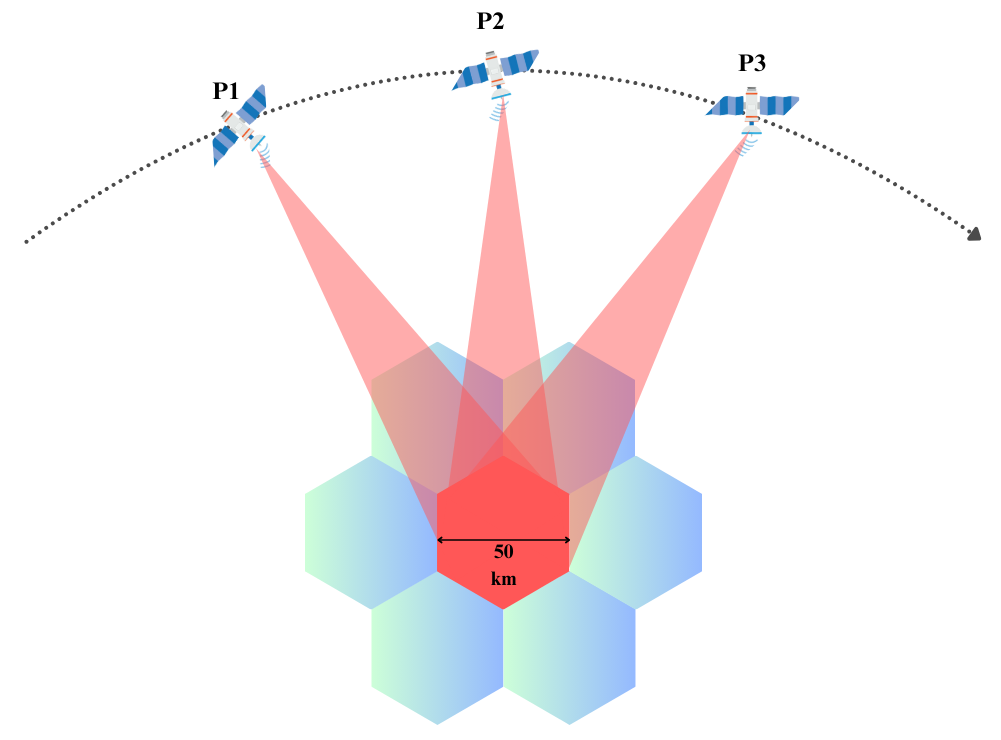}
    \caption{{Representation of the satellite positions considered for the area of study (in red).}}
    \label{satellite_mobility_image}
    \vspace{-0.2cm}
\end{figure}
Averaging the values obtained from these three assessments, we can get an overview of the satellite channel quality.

\begin{figure*}[ht]
    \centering
    \begin{minipage}{0.45\textwidth}
        \centering
        \includegraphics[width=\linewidth]{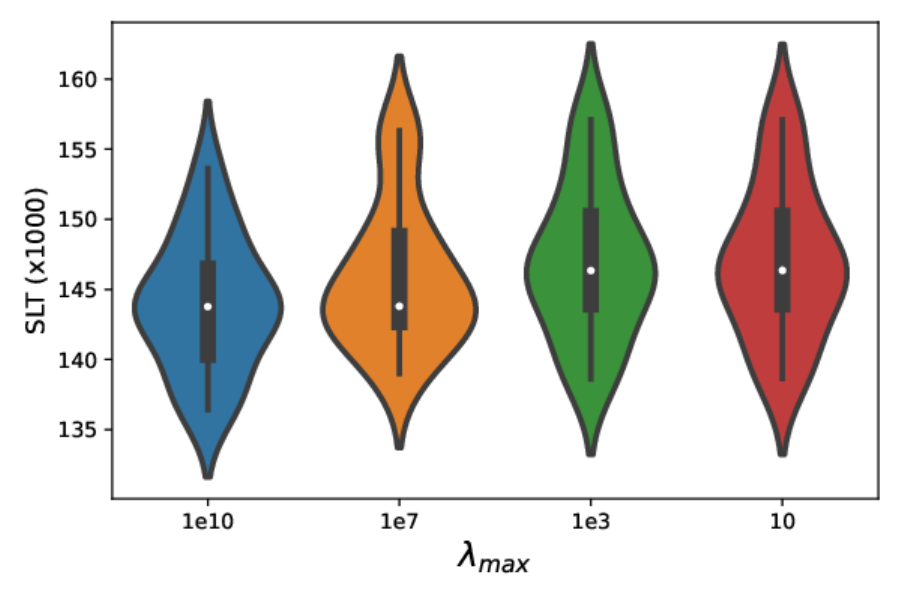}
        \caption{Sum Log Throughput distribution achieved by \texttt{BLASTER} in high traffic for various $\lambda_{\rm max}$.}
        \label{violin_slt_lambdas_high}
    \end{minipage}
    \hfill
    \begin{minipage}{0.45\textwidth}
        \centering
        \includegraphics[width=\linewidth]{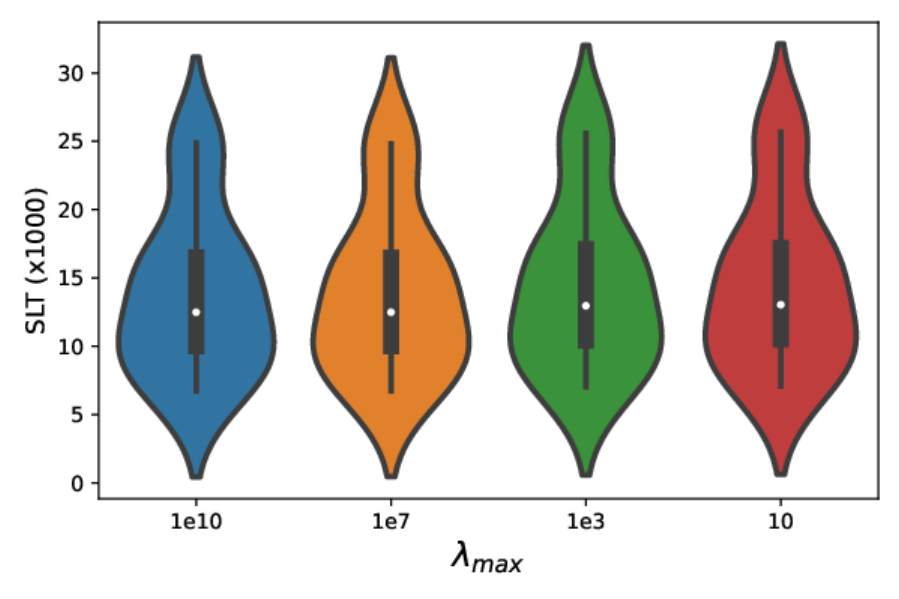}
        \caption{Sum Log Throughput distribution achieved by \texttt{BLASTER} in low traffic for various $\lambda_{\rm max}$.}
        \label{violin_slt_lambdas_low}
    \end{minipage}
    \hfill
\end{figure*}

\subsubsection{Benchmarks}

To compare and assess the performance of both developed algorithms, we introduce the two following benchmarks:
The \texttt{3GPP-TN} configuration includes only a \ac{TN} operating on a bandwidth of $10$ MHz. 
In contrast, the \texttt{3GPP-NTN} configuration includes a satellite network overlaying the terrestrial one.
In this case, the bandwidth is divided according to the \ac{3GPP} suggestions in \cite{3GPPTR38.821}, 
with $30$ MHz assigned to the satellite tier and $10$ MHz to the terrestrial one.
In both configurations, the \acp{UE} associate to the \acp{MBS} based on the max-RSRP rule, with no \ac{DL} transmit power optimization or \ac{MBS} shutdown implemented.
The most important simulation parameters are listed in Table \ref{simul_params} following \cite{3GPPTR36.763, 3GPPTR38.811, 3GPPTR38.821,3GPPTR38.901, 3GPPTR36.814, 3GPPTR36.931}.
\begin{table}[h!]
\begin{center}
\begin{tabular}{|l|l|}
\hline Parameter & Value \\
\hline Total Bandwidth $W$  & $40$ $\mathrm{MHz}$ \\
\hline Carrier frequency $f_c$  & $2$ $\mathrm{GHz}$ \\
\hline Subcarrier Spacing   & $15$ $\mathrm{kHz}$ \\
\hline Urban/Rural Inter-Site Distance & $500/1732$ $m$ \\
\hline Number of Macro BSs & $1776$ \\
\hline Satellite Altitude \cite{3GPPTR38.821} & $600$ km \\
\hline Terrestrial Max Tx Power per RE ${p}_{\mathrm{max}}$ \cite{3GPPTR36.814} & $ 17.7 \text{ } \mathrm{dBm}$ \\
\hline Satellite Max Tx Power per RE ${p}_{\mathrm{max}}$ \cite{3GPPTR38.821} & $15.8 \text{ } \mathrm{dBm}$ \\
\hline Antenna gain (Terrestrial) $G_{T_X}$ \cite{3GPPTR36.931} & $14\text{ } \mathrm{dBi}$ \\
\hline Antenna gain (Satellite) $G_{T_X}$ \cite{3GPPTR38.821} & $30\text{ } \mathrm{dBi}$ \\
\hline Shadowing Loss (Terrestrial) $SF$ \cite{3GPPTR38.901} & $4\text{ } - \text{ } 8\text{ } \mathrm{dB}$ \\
\hline Shadowing Loss (Satellite) $SF$ \cite{3GPPTR38.811} & $0\text{ } - \text{ } 12\text{ } \mathrm{dB}$ \\
\hline Line-of-Sight Probability (Terrestrial / Satellite) & Refer to \cite{3GPPTR38.901} / \cite{3GPPTR38.811} \\
\hline White Noise Power Density & $-174 \text{ }$ $ \mathrm{dBm} /  \mathrm{Hz}$ \\
\hline Coverage threshold $RSRP_{\rm min}$ & $-120 \text{ }$ $ \mathrm{dBm}$ \\
\hline Urban/Rural \acp{UE} distribution proportion & $40\% / 60\%$\\
\hline {UE Antenna gain $G_{UE}$ \cite{3GPPTR38.811}} & {$0\text{ } \mathrm{dBi}$}\\
\hline {Satellite baseline energy consumption $E_c$} & {$500\text{ } \mathrm{J}$}\\
\hline
\end{tabular}

\end{center}
\caption{Simulation parameters.}
\label{simul_params}
\end{table}

\subsection{{Optimization of the regularization parameter}}
\label{sec:reg_netw_perf}

As specified in Section \ref{seq:problem_formulation}, $\lambda$ is a regularization parameter, which allows to control the trade-off between maximizing network \ac{SLT} and minimizing the \ac{TN} energy consumption (see \eqref{OPT_PB_1}).
Indeed, a high value of $\lambda$ leads \texttt{BLASTER} to increase the number of shutdown \acp{MBS}, as the priority becomes to reduce the \ac{TN} energy consumption. 
Conversely, a low value of $\lambda$ indicates that the focus is on improving the \ac{SLT}, by balancing the load and ensuring a proportional fair resource allocation.
To achieve this goal, in our study, we set $\lambda$ inversely proportional to the number of \acp{UE}, $K$, in the network as follows: 
\begin{equation}\label{eq:lambda_max}
    \lambda  =  \frac{\lambda_{\mathrm{max}}K_{\mathrm{min}}}{K}, 
\end{equation}
where $\lambda_{\mathrm{max}}$ is the value of $\lambda$ when the \acp{UE} number is equal to its lowest, i.e., $K_{\mathrm{min}}$. 
This approach allows us to simplify the setting of the hyper-parameter $\lambda$, by fixing $\lambda_{\rm max}$ and using \eqref{eq:lambda_max}.
Therefore, in the following, we investigate the impact of $\lambda_{\rm max}$ on the behaviour of \texttt{BLASTER}.
In particular, we let $\lambda_{\rm max}$ take values in $\left\{10,1e3,1e7,1e10\right\}$, and study the corresponding distribution of the \ac{SLT} during high and low traffic as well as that of the total \ac{TN} energy consumption.  

In {Fig.} \ref{violin_slt_lambdas_high}, we see a clear trend, with larger values of $\lambda_{\rm max}$ leading to a downward shift of the SLT distribution during high traffic.
In fact, we observe a gradual increase of the median value of the \ac{SLT} in high traffic, as $\lambda_{\rm max}$ decreases. 
This is in line with our assumption that a small value of $\lambda$ moves the focus from the \ac{TN} energy consumption to improving the \ac{SLT}.
Specifically, in {Fig.} \ref{violin_slt_lambdas_high}, we see a $2\,\%$ increase of \ac{SLT} by varying $\lambda_{\rm max}$ from the largest value ($\lambda_{\rm max} = 1e10$) to the smallest one ($\lambda_{\rm max} = 10$). 
Note that, during high traffic, the performance in terms of \ac{SLT} stagnates, once $\lambda_{\rm max}$ starts crossing extremely high values.
Conversely, in {Fig.} \ref{violin_slt_lambdas_low}, we notice that there is a limited change of the \ac{SLT}, if the value of $\lambda_{\rm max}$ varies. 
In fact, in low traffic scenarios, our framework aims mainly to reduce the \ac{TN} energy consumption, i.e., maximising the \ac{SLT} has limited importance.

To verify this analysis, {Fig.} \ref{Journal_PC_lambdas_3} plots the \ac{TN} energy consumption for the distinct values of $\lambda_{\rm max}$, specified previously.
We remark that the \ac{TN} energy consumption increases throughout the day when the value of $\lambda_{\rm max}$ decreases. 
Indeed, a small value of $\lambda_{\rm max}$ hampers the transmit power optimization in \eqref{Solution_proximal_gradient}, as each update becomes negligible, making it hard to effectively drive down the energy consumption of the \ac{TN}.

Through these results, we are able to grasp the trade-off between reducing energy consumption and balancing the network load efficiently. 
Indeed, we observe that limiting the value of $\lambda_{\rm max}$, i.e., $\lambda_{\rm max} = 10$ and $\lambda_{\rm max} = 1e3$, we only achieve, during high traffic hours, a minor gain in terms of network \ac{SLT} of approximately $2\%$ compared to the other cases ($\lambda_{\rm max} = 1e7$ and $\lambda_{\rm max} = 1e10$), while the total energy consumption surges by approximately $84\%$.
Considering all of this, we fix $\lambda_{\rm max} = 1e7$ as this value ensures a balanced performance in both \ac{SLT} and \ac{TN} energy consumption.
\begin{figure}[ht]
    \centering
        \includegraphics[width = 0.45 \textwidth]{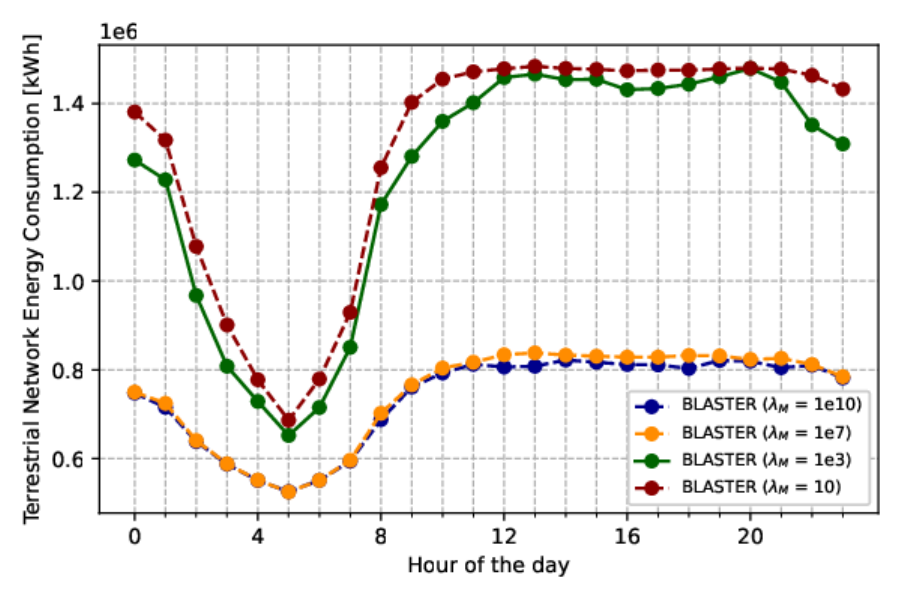}
        \caption{Daily profile of the {terrestrial} network energy consumption achieved by \texttt{BLASTER} for various $\lambda_{\rm max}$.}
        \label{Journal_PC_lambdas_3}
         \vspace{-0.2cm}
\end{figure}

\subsection{Complexity and Convergence Analysis}
\label{subseq:complexity_analysis}
In {Fig.} \ref{complexity_plot}, we display the operational complexity of both proposed frameworks \texttt{BLASTER} and \texttt{HEURISTIC} as derived in Section \ref{subseq:deriving_complexity}.
We notice that the operational complexity of \texttt{BLASTER} is higher than \texttt{HEURISTIC} throughout the day. This is expected, as the latter framework is implemented using less complex operations compared to the former, which resorts to multiple gradient descent methods. 
In fact, we see an average decrease in operational complexity of approximately $21 \%$ in the day, which underlines the simplicity of \texttt{HEURISTIC} compared to \texttt{BLASTER}.
Moreover, we notice that the complexity follows a pattern similar to the traffic load for both frameworks, which corroborates with the formulas derived in \ref{subseq:deriving_complexity}. Indeed, a lower number of total \acp{UE} $K$ naturally leads to a lower complexity, as seen in {Fig.} \ref{complexity_plot}. On the opposite, a surge in traffic results in an increased complexity due to a higher number of operations needed to complete both frameworks.

{
To illustrate the convergence of \texttt{BLASTER}, 
we analyse the relative gain between successive iterations.
A positive relative gain indicates an increase in the objective function,
whereas a negative relative gain implies a decrease.
Fig. \ref{convergence_plot} presents the average relative gain per iteration along with the corresponding standard deviation, 
computed across multiple runs of \texttt{BLASTER} at different hours of the day.
Initially, we observe a sharp increase in relative gain, 
followed by a steady decline towards zero.
Notably, the relative gain remains positive for the vast majority of iterations, confirming that the objective function generally improves over time.
This behaviour highlights the convergence of \texttt{BLASTER} which on average occurs after approximately $30$ iterations.
}
\begin{figure}[!ht]
    \centering
    \includegraphics[width = 0.5 \textwidth]{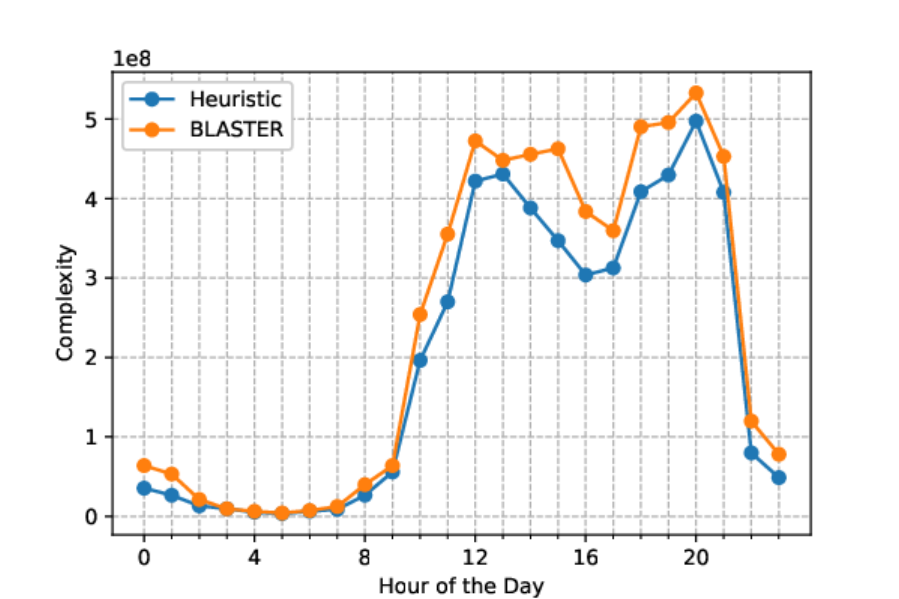} 
    \caption{Daily profile of the complexity of \texttt{BLASTER} and \texttt{HEURISTIC}.}
    \label{complexity_plot}
    \vspace{-0.2cm}
\end{figure}

\begin{figure}[!ht]
    \centering
    \includegraphics[width = 0.44 \textwidth]{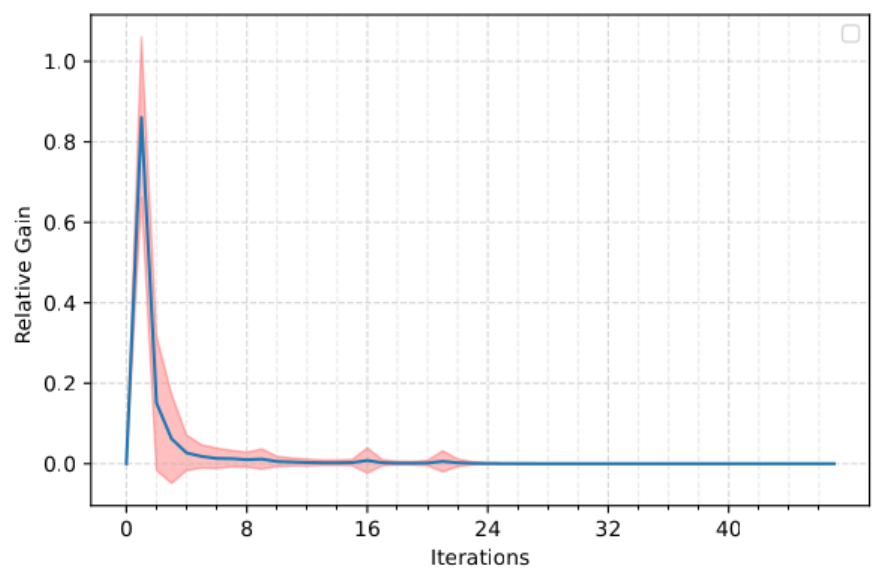} 
    \caption{{Average relative gain per iteration for \texttt{BLASTER}}.}
    \label{convergence_plot}
    \vspace{-0.2cm}
\end{figure}

\subsection{Impact of the Satellite Network on Traffic Distribution}
\label{subseq:satellite_role_analysis}

In this section, we look at the shifting role that the satellite{s} play {on the mobile network} throughout the day. 
{Fig.} \ref{Prop_sat_plot} {presents the daily profile of the fraction of \acp{UE} associated with the satellite network in the different frameworks under investigation. 
Also, we plot the hourly number of \acp{UE} deployed in the network, represented by the black dotted line.}
As we have {previously} underlined, the priority for {the proposed \texttt{BLASTER} and \texttt{HEURISTIC} frameworks} during low traffic is to reduce {the \ac{TN}} energy consumption. Hence, the satellite becomes a compelling option for {offloading the \ac{TN} and shutting down lightly loaded \acp{MBS}}. 
{{Fig.} \ref{Prop_sat_plot} shows that \texttt{BLASTER} and \texttt{HEURISTIC} achieve a $70\,\%$ and $500\,\%$ increase of the fraction of \acp{UE} associated to the satellite network in low-traffic hours as compared to the benchmark \texttt{3GPP-NTN}.}
Also, we notice that the proportion of \acp{UE} associated to the satellite is far greater for \texttt{HEURISTIC} compared to {\texttt{BLASTER}}. 
This is due to the fact that {\texttt{HEURISTIC} ensures} that every \ac{UE} that has a signal strength greater than $RSRP_{\rm min}$ is associated to the satellite, opposed to the more sophisticated \texttt{BLASTER}, who would still consider the available throughput before associating to the satellite.  
Conversely, during high traffic {hours}, we notice that the satellite {network} takes a less prominent role for both {\texttt{BLASTER} and \texttt{HEURISTIC}}, essentially acting as an umbrella, providing {service} to \acp{UE} that would otherwise be out of {coverage}. 
{Accordingly, most of} of the total bandwidth {is allocated} to the terrestrial tier, {which} can efficiently serve more \acp{UE} than the non-terrestrial tier and can significantly improve the spatial reuse.
\begin{figure}[ht]
    \centering
    \includegraphics[width = 0.45 \textwidth]{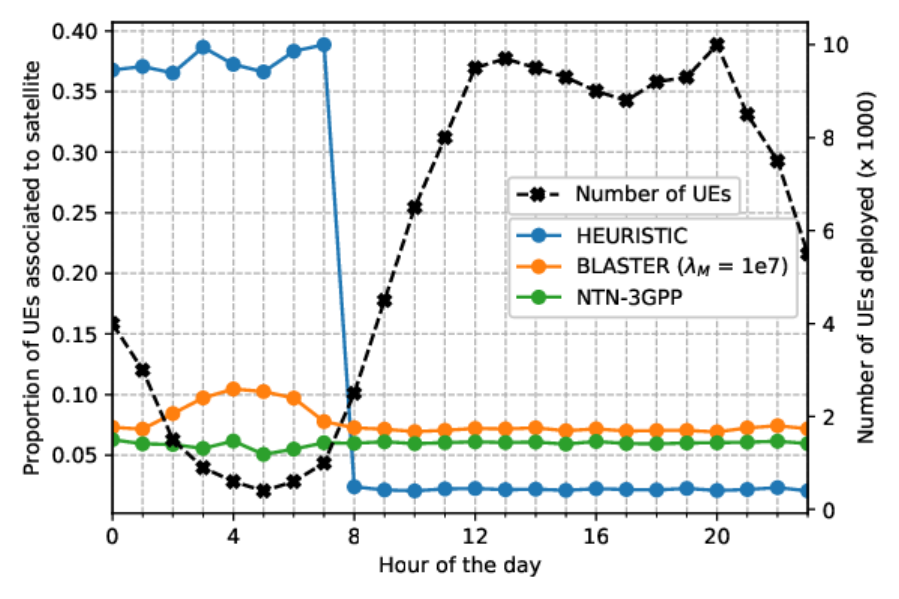} 
    \caption{Daily profile of the proportion of \acp{UE} associated to the satellite network.}
    \label{Prop_sat_plot}
    \vspace{-0.2cm}
\end{figure}

\subsection{Analysis on the Network Sum Log-Throughput}\label{sec:analysis_SLT}

In {this} section, we {analyse the \ac{SLT} achieved by} the algorithms {under investigation} throughout the day and{, in particular,} how they adapt {the network resources} to {the daily variations of the} traffic load. 
{Fig.} \ref{SLT_plot} shows the daily profile of the \ac{SLT} achieved by the schemes under investigation, and {Fig.} \ref{relative_gain_plot} presents the relative \ac{SLT} gain for \texttt{BLASTER}, \texttt{HEURISTIC} and \texttt{3GPP-NTN} compared to \texttt{3GPP-TN}. 
Remember that in \texttt{3GPP-TN} there are no satellites available {to} serve \acp{UE} {perceiving low \ac{RSRP} from terrestrial \acp{MBS}. 
In addition,} the total bandwidth available {for the \ac{TN}} is only $10$ MHz, {which leads to low \ac{UE} throughput, especially during busy hours}.
{As expected}, we observe a{n \ac{SLT}} performance improvement with the {integration} of satellites into the network.
Indeed, \ac{LEO} satellites are able to provide service to cell-edge \acp{UE}{, which} do not perceive a signal strong enough to be served from the terrestrial tier, thereby increasing the network \ac{SLT}. \newline
{In more detail, in {Fig.} \ref{SLT_plot}}, we can  see that {both \texttt{BLASTER} and the proposed \texttt{HEURISTIC}} outperform the \texttt{3GPP-NTN} benchmark, with an average \ac{SLT} increase of approximately $6\,\%$ across the day, {highlighted in {Fig.} \ref{relative_gain_plot}.}
Moreover, and although the \ac{SLT} gain for \texttt{3GPP-NTN} in {Fig.} \ref{SLT_plot} is more visually discernible during high-traffic hours, note that its relative gain with respect to \texttt{3GPP-TN}, illustrated in {Fig.} \ref{relative_gain_plot} does not change drastically and remains around $2\%$ throughout the day.\newline
{In low traffic, {Fig.} \ref{relative_gain_plot} underlines that \texttt{BLASTER} outperforms \texttt{HEURISTIC} in terms of \ac{SLT}. 
This is due to the different association methods used for both proposed frameworks. 
As detailed in Section \ref{subseq:satellite_role_analysis}, the association criterion to the satellite in \texttt{HEURISTIC} is more lenient, which results to more \acp{UE} sharing the resources of the satellite, leading to a deteriorated throughput for its \acp{UE} and a worse \ac{SLT} performance.}
Note that the improvement compared to the \texttt{3GPP} benchmarks is more apparent during high traffic {hours,} typically midday to end of evening than low traffic {hours}. 
Indeed, as detailed before,{ the onus of} both {the} \texttt{HEURISTIC} and \texttt{BLASTER} {during low traffic} is on reducing energy consumption, which explains the mitigated improvement of the network \ac{SLT}.\newline
During high traffic, both algorithms strive to balance the {traffic} load {to} maximize the \ac{SLT}, which explains the striking improvement {with respect to \texttt{3GPP-NTN} and \texttt{3GPP-TN}}.
That gain is due to various reasons. 
First of all, the \ac{UE}-\ac{MBS} association methods {of \texttt{HEURISTIC} and \texttt{BLASTER}} are designed to ensure a proportional {fair resource allocation}, which increase{s} the \ac{SLT}. {In addition}, the dynamic split of the {network} bandwidth based on {the time-varying fraction of \acp{UE} associated to the satellite network} ({see} Proposition \ref{Proposition_optimal_epsilon}) {allows the proposed \texttt{HEURISTIC} and \texttt{BLASTER} to} astutely distribute the {frequency} resources, differently than the \ac{3GPP} benchmarks \cite{3GPPTR38.821}.
In fact, since the majority of \acp{UE} are associated to a terrestrial \ac{MBS} {(see {Fig.} \ref{Prop_sat_plot})}, a larger share of the resources are allocated to the terrestrial tier{,} which leads to a {larger provided} data-rate {than the one achievable following the \ac{3GPP} benchmarks}. 
By splitting the bandwidth {based on the fraction of \acp{UE} associated to the satellites}, we observe an enhancement of the mean {\ac{SLT}} by at least {$8\%$ and $6\%$ for both the proposed frameworks compared to \texttt{3GPP-TN} and \texttt{3GPP-NTN} respectively.}

\begin{figure}[ht]
    \centering
    \includegraphics[width = 0.45 \textwidth]{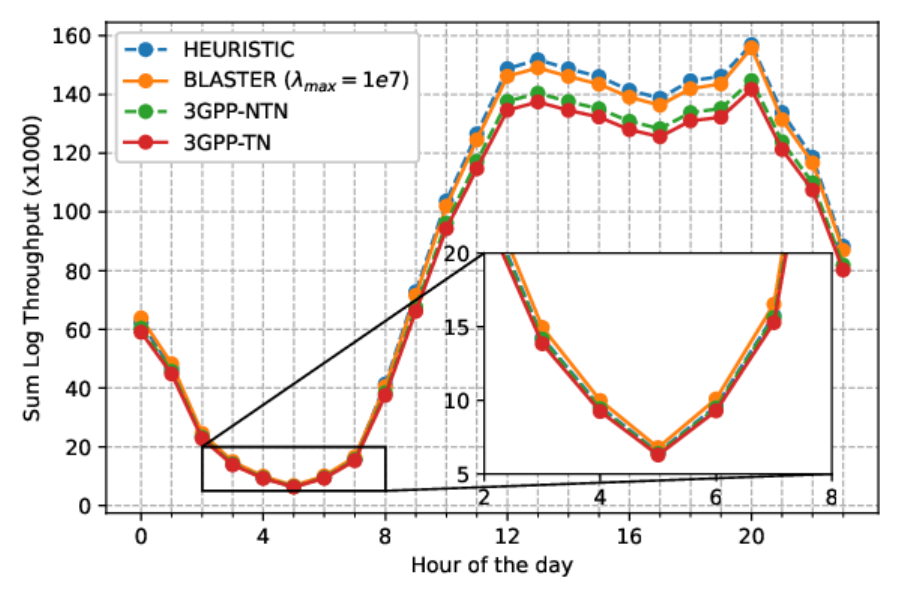}
    \caption{Daily profile of the sum log throughput.}
    \label{SLT_plot}
    \vspace{-0.2cm}
\end{figure}

\begin{figure}[ht]
    \centering
    \includegraphics[width = 0.45 \textwidth]{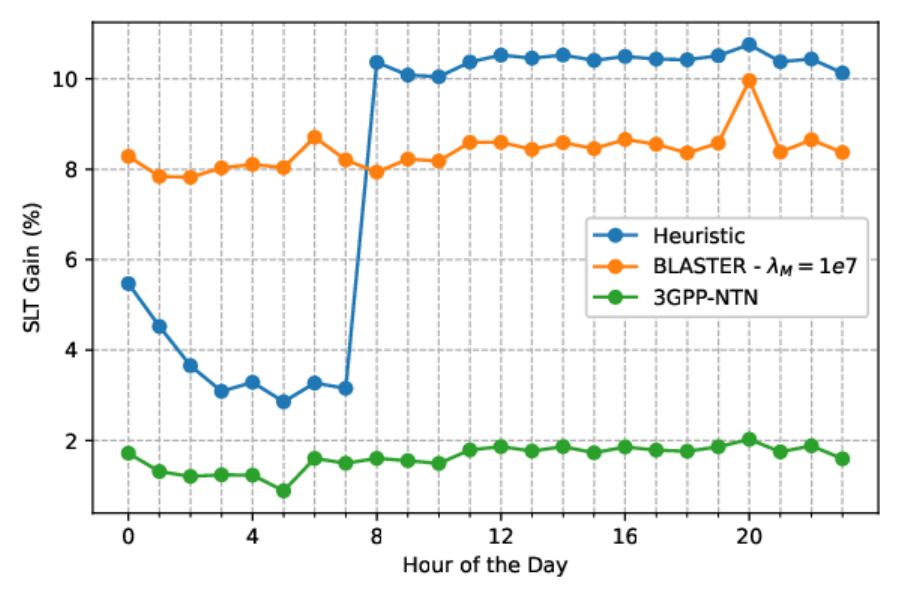}
    \caption{Relative gain of the sum log throughput compared to \texttt{3GPP-TN}.}
    \label{relative_gain_plot}
    \vspace{-0.2cm}
\end{figure}

\subsection{Analysis on the Network Energy Consumption}\label{sec: TNPC}

In this section, we study the performance of the proposed {\texttt{BLASTER} and \texttt{HEURISTIC}} in terms of energy usage.
To {this} end, {Fig.} \ref{Power_conso_plot} displays the \ac{TN} energy consumption throughout the day for the various algorithms {under investigation}.
The red dotted line represents the energy consumption level for both {\texttt{3GPP-TN} and \texttt{3GPP-NTN}} settings{, for which the \ac{TN} energy} consumption is at maximum level through the entire day, {as they do not integrate neither power control nor \ac{MBS} shutdown}. 
{Also, the black dotted line represents the same setting as \texttt{3GPP-TN}, 
but with the added mechanism of shutting down all inactive \acp{MBS} when they have no UEs,
denoted \texttt{3GPP--ENERGY SAVING}. 
This setting reduces the average daily energy consumption by $17$ \% compared to \texttt{3GPP} benchmarks,
due to its ability to shut down \acp{MBS},
which leads to a more efficient energy use.}
As discussed in Section \ref{subseq:satellite_role_analysis}, 
{with {\texttt{BLASTER} and \texttt{HEURISTIC}}}, 
the satellite {network} serves a {larger} proportion of \acp{UE} during low traffic {than the standard \texttt{3GPP-NTN}}, which facilitates the shutdown of terrestrial \acp{MBS}. 
This is apparent in {Fig.} \ref{Power_conso_plot}, as, in low traffic, the \ac{TN} energy consumption sees an average decrease of approximately $67\,\%$ and $54\,\%$ for the \texttt{BLASTER} and \texttt{HEURISTIC} settings {with respect to the benchmark \texttt{3GPP-NTN} and \texttt{3GPP-TN}}, respectively.
In comparison, for the \texttt{3GPP--ENERGY SAVING} configuration, \texttt{BLASTER} and \texttt{HEURISTIC} experience a decrease of approximately $49\,\%$  and $31\,\%$, respectively.
We also notice that,
even though{, during the low-traffic hours,} the \texttt{HEURISTIC} {is characterized by a larger share of} \acp{UE} associated to the satellite {network than the \texttt{BLASTER}}, 
{it still leads to a larger \ac{TN} energy consumption}. 
Interestingly, the transmit power optimization {of \texttt{BLASTER}} allows for a {greater reduction of} the transmit power {of the \ac{TN} \acp{MBS}}. 
{Indeed, the transmit power vector update step for \texttt{BLASTER} \eqref{Gradient_power} - \eqref{power_update} is specifically designed to solve the {optimization} problem, reducing the terrestrial \acp{MBS} transmit power efficiently while the \texttt{HEURISTIC} updates the transmit power for terrestrial \acp{MBS} based on a rule of thumb, which suffices to reduce the network energy consumption.}
This also explains the striking difference in the high-traffic scenario, as the average energy consumed by the network is reduced by around $56\,\%$ and $14\,\%$ for \texttt{BLASTER} and \texttt{HEURISTIC} respectively{, compared to the \ac{TN} energy consumption in the \texttt{3GPP} benchmarks.}
This result further underlines the flexible nature of both proposed frameworks, adjusting their behaviours to the traffic state and its specific demand.
{Moreover, Fig. \ref{satellite_energy_consumption} depicts the energy consumption of the satellite throughout the day for the settings detailed above. 
We notice that \texttt{BLASTER} and \texttt{HEURISTIC} considerably reduce the daily average energy consumption of the satellite by up to $80\%$ and $90\%$ respectively.
It is important to note that the \texttt{3GPP} benchmark is configured such that it utilizes the full $30$ MHz of bandwidth made available throughout the day, 
which leads to consistently higher energy consumption.
In contrast, \texttt{BLASTER} and \texttt{HEURISTIC} use a smaller portion of the total bandwidth, as highlighted in Fig. \ref{Prop_sat_plot} which results in lower energy usage.
Additionally, by comparing Fig. \ref{Power_conso_plot} and Fig. \ref{satellite_energy_consumption},
we notice that the energy consumption of a single satellite is negligible compared to the \ac{TN} (less than $0.1\%$ of the total terrestrial energy use), 
which further validates our earlier assumption in Section \ref{seq:problem_formulation} 
that the satellite energy consumption is insignificant in comparison to the \ac{TN}.}

\begin{figure}[ht]
    \centering
    \includegraphics[width = 0.45 \textwidth]{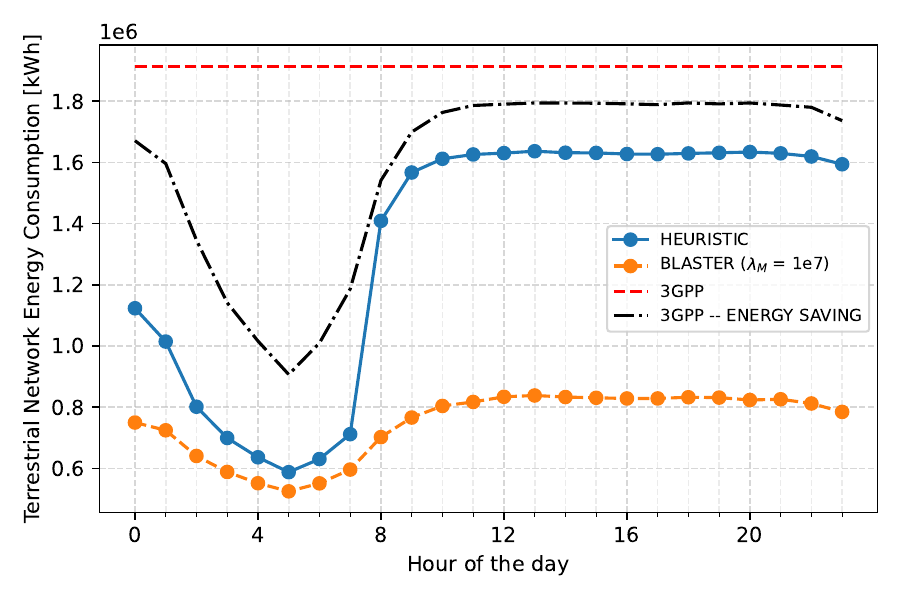}
    \caption{Daily profile of the \ac{TN} energy consumption.}
    \label{Power_conso_plot}
    \vspace{-0.2cm}
\end{figure}

\begin{figure}[ht]
    \centering
    \includegraphics[width = 0.45 \textwidth]{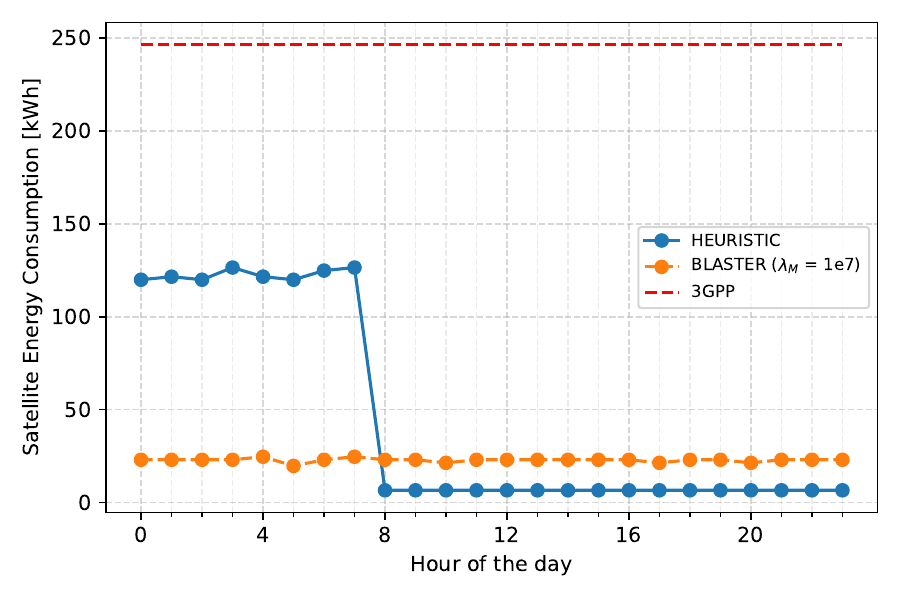}
    \caption{Daily profile of the satellite energy consumption}
    \label{satellite_energy_consumption}
    \vspace{-0.2cm}
\end{figure}

\section{Conclusions}\label{seq:conclusion}
In this paper, we have presented \texttt{BLASTER}, a framework designed to optimize radio resource management in an integrated \ac{TN-NTN}.
\texttt{BLASTER} aims to control \ac{UE} association, splits the bandwidth between terrestrial and non-terrestrial tiers, and manages the \ac{MBS} activation and \ac{MBS} transmit power level. 
A novel method for splitting the bandwidth between terrestrial and non-terrestrial tiers is introduced, based on the fraction of \acp{UE} associated to the latter.
The proposed algorithm also highlights the critical and dynamic part the satellites play in this integrated \ac{TN-NTN}, adapting their role to various traffic demands. 
Indeed, the non-terrestrial tier takes a prominent role in low traffic, ensuring that terrestrial \acp{MBS} offload their \acp{UE} to the satellite to enable their shutdown. 
Conversely, the non-terrestrial tier embraces a secondary role in high traffic, mainly acting as an outlet to cell-edge \acp{UE}, facilitating load distribution while also giving up a share of its resources to the terrestrial tier.
Simulation results display notable results, as the average \ac{TN} energy consumption decreases by $67\,\%$ in low traffic compared to the \texttt{3GPP} benchmarks, while the average \ac{SLT} in high traffic increases by roughly $6\,\%$ compared to \texttt{3GPP-NTN}.
In future works, we plan to modify the current deterministic association method,
as this method can be challenging to implement in practice.
{Instead, we aim to develop an ML-based solution,
and adopt a distributed approach to tackle the problem.}
{Another potential future enhancement could involve exploring a multi-beam system for the satellites, with a focus on studying inter-beam interference,
and how to manage it efficiently,
as well as considering the \ac{UL} scenario.}
\vspace{-1cm}
\appendix{Proof of Proposition 1.}
First, we inject \eqref{X_opt_formula} into formula \eqref{Dual_definition} to get:
\begin{align}\label{start_prop_1}
    \mathcal{D}\left(\mu\right) &= \mathcal{L}\left(X^\star,\mu \right) \nonumber \\
    &= \frac{1}{2} \Vert X^\star \Vert^2_F - \mathrm{Tr}\left( {X^\star}^T \tilde{X} \right) + \frac{1}{2} \Vert \tilde{X} \Vert^2_F \nonumber \\
    &\quad + \Big[ \big( X^\star \odot \beta  \big) \cdot p \Big]^T \mu - \left( \mathrm{RSRP}_{\mathrm{min}} \cdot \mathbbm{1}_{K}\right)^T \mu.
\end{align}
Then, keeping the same notations as above, we notice that:
\begin{equation}\label{Tr_reformulated}
\begin{split}
\Big[ \big( X^\star \odot \beta  \big) \cdot p \Big]^T \mu = \mathrm{Tr}\left(X^\star \left( \beta \odot  p^{\rm PAD} \odot \mu^{\rm PAD} \right)^T \right),
\end{split}
\end{equation}
\noindent Indeed, developing the left component of \eqref{Tr_reformulated}, we get:
\begin{align*}
&\Big[ \big( X^\star \odot \beta  \big) \cdot p \Big]^T \mu = \\ 
&\left[ 
\begin{bmatrix}
x_{11}^* & \cdots & x_{1L}^*\\
\vdots & \ddots & \vdots  \\
x_{K1}^* & \cdots & x_{KL}^*\\
\end{bmatrix}
\odot
\begin{bmatrix}
\beta_{11} & \cdots & \beta_{1L}\\
\vdots & \ddots & \vdots  \\
\beta_{K1} & \cdots & \beta_{KL}\\
\end{bmatrix}
\cdot
\begin{bmatrix}
p_1\\
\vdots\\
p_L
\end{bmatrix}\right]^T
\begin{bmatrix}
\mu_1\\
\vdots\\
\mu_L \\
\end{bmatrix}
\\
& = \left[ 
\begin{bmatrix}
x_{11}^*\beta_{11} & \cdots & x_{1L}^*\beta_{1L}\\
\vdots & \ddots & \vdots  \\
x_{K1}^*\beta_{K1} & \cdots & x_{KL}^*\beta_{KL}\\
\end{bmatrix}
\cdot
\begin{bmatrix}
p_1\\
\vdots\\
p_L
\end{bmatrix}\right]^T
\begin{bmatrix}
\mu_1\\
\vdots\\
\mu_L \\
\end{bmatrix} \\
&=
\begin{bmatrix}
\sum\limits_{j = 1}^{L} x^*_{1j}\beta_{1j}p_j\\
\vdots\\
\sum\limits_{j = 1}^{L} x^*_{Kj}\beta_{Kj}p_j
\end{bmatrix}^T
\begin{bmatrix}
\mu_1\\
\vdots\\
\mu_L \\
\end{bmatrix}
= \sum\limits_{i=1}^{K}\mu_i \left( \sum\limits_{j = 1}^{L} x^*_{ij}\beta_{ij}p_j \right)
\end{align*}
Then, focusing on the right component of \eqref{Tr_reformulated}, we have:
\begin{align*}
   & X^\star \left( \beta \odot  p^{\rm PAD} \odot \mu^{\rm PAD} \right)^T = 
   \begin{bmatrix}
    x_{11}^* & \cdots & x_{1L}^*\\
    \vdots & \ddots & \vdots  \\
    x_{K1}^* & \cdots & x_{KL}^*\\
    \end{bmatrix}
    \cdot
    \\
    &\left( 
    \begin{bmatrix}
    \beta_{11} & \cdots & \beta_{1L}\\
    \vdots & \ddots & \vdots  \\
    \beta_{K1} & \cdots & \beta_{KL}\\
    \end{bmatrix}
    \odot
    \begin{bmatrix}
    p_1 & \cdots & p_L\\
    \vdots & \ddots & \vdots  \\
    p_1 & \cdots & p_L\\
    \end{bmatrix}
    \odot
    \begin{bmatrix}
    \mu_1 & \cdots & \mu_1\\
    \vdots & \ddots & \vdots  \\
    \mu_K & \cdots & \mu_K\\
    \end{bmatrix}
    \right)^T
    \\
    &=
    \begin{bmatrix}
    x_{11}^* & \cdots & x_{1L}^*\\
    \vdots & \ddots & \vdots  \\
    x_{K1}^* & \cdots & x_{KL}^*\\
    \end{bmatrix}
    \begin{bmatrix}
        \beta_{11}p_1\mu_1 & \cdots & \beta_{1L}p_L\mu_1\\
        \vdots & \ddots & \vdots  \\
        \beta_{K1}p_1\mu_K & \cdots & \beta_{KL}p_L\mu_K\\
    \end{bmatrix}^T
\end{align*}
\begin{align*}
    &=
    \begin{bmatrix}
    x_{11}^* & \cdots & x_{1L}^*\\
    \vdots & \ddots & \vdots  \\
    x_{K1}^* & \cdots & x_{KL}^*\\
    \end{bmatrix}
    \begin{bmatrix}
        \beta_{11}p_1\mu_1 & \cdots & \beta_{K1}p_1\mu_K\\
        \vdots & \ddots & \vdots  \\
        \beta_{1L}p_L\mu_1 & \cdots & \beta_{KL}p_L\mu_K\\
    \end{bmatrix} \\ 
    &= 
    \begin{bmatrix}
        \sum\limits_{j = 1}^{L} x^*_{1j} \left(\beta_{1j} p_j \mu_1 \right) & \cdots & \cdots\\
        \vdots & \ddots & \vdots  \\
        \cdots & \cdots & \sum\limits_{j = 1}^{L} x^*_{Kj} \left(\beta_{Kj} p_j \mu_K \right)\\
    \end{bmatrix}\\
\end{align*}
Then, by applying the trace operator, we effectively get:
$$
\mathrm{Tr}\left( X^\star \left( \beta \odot  p^{\rm PAD} \odot \mu^{\rm PAD} \right)^T \right) = \sum\limits_{i=1}^{K}\mu_i \left( \sum\limits_{j = 1}^{L} x^*_{ij}\beta_{ij}p_j \right)
$$
\noindent Injecting \eqref{Tr_reformulated} into \eqref{start_prop_1} and noticing that , we obtain:
\begin{equation}\label{Dual_definition_3}
\begin{split}
    \mathcal{D}\left(\mu\right) &= \frac{1}{2}  \vert \vert X^\star \vert \vert^2_F - \mathrm{Tr}\left( X^\star \left[ \tilde{X} - \beta \odot  p^{\rm PAD} \odot \mu^{\rm PAD} \right]^T \right) \\ 
    &- \left( RSRP_{\mathrm{min}} \cdot \mathbbm{1}_{K}\right)^T \mu,
\end{split}
\end{equation}
which concludes the proof of the proposition.
\ifCLASSOPTIONcaptionsoff
  \newpage
\fi



%

\balance
\bibliographystyle{IEEEtran}
\bibliography{bibtex_et_al}

%





\begin{acronym}[Acronym]
    \acro{EE}{energy efficiency}
    \acro{HetNet}{heterogeneous network}
    \acro{DL}{downlink}
    \acro{UL}{uplink}
    \acro{CDF}{cumulative distribution function}
    \acro{NR}{new radio}
    \acro{PL}{path loss}
    \acro{ICC}{international conference on communications}
    \acro{UAV}{unmanned aerial vehicle}
    \acro{UE}{user equipment}
    \acro{PRB}{physical resource block}
    \acro{RE}{resource element}
    \acro{RSRP}{reference signal received power}
    \acro{SINR}{signal-to-interference-plus-noise ratio}
    \acro{QoS}{Quality of Service}
    \acro{UE}{user equipment}
    \acro{LoS}{line-of-sight}
    \acro{NLoS}{non Line-of-sight}
    \acro{3GPP}{3rd Generation Partnership Project}
    \acro{thp}{throughput}
    \acro{UMi}{urban micro}
    \acro{UMa}{urban macro}
    \acro{RMa}{rural macro}
    \acro{CAGR}{compound annual growth rate}
    \acro{HO}{handover}
    \acro{MNO}{mobile network operator}
    \acro{NOMA}{non-orthogonal multiple access}
    \acro{ML}{machine learning}
    \acro{RAN}{radio access network}
    \acro{BS}{base station}
    \acro{MBS}{macro base station}
    \acro{ISD}{inter-site distance}
    \acro{LEO}{low-earth orbit}
    \acro{O2I}{outdoor-to-indoor}
    \acro{MCPA}{multi-carrier power amplifier}
    \acro{RF}{radio frequency}
    \acro{SLT}{sum log-throughput}
    \acro{TN-NTN}{terrestrial and non-terrestrial network}
    \acro{TN}{terrestrial network}
    \acro{NTN}{non-terrestrial network}
    \acro{BCGA}{block coordinate gradient ascent}
    \acro{ST}{sum throughput}
    \acro{HAPS}{high-altitude platform station}
    \acro{MIMO}{multiple-input multiple-output}
\end{acronym}

\end{document}